\documentclass{article} 
\usepackage{nips14submit_e,times}


\RequirePackage[OT1]{fontenc}
\RequirePackage{amsthm,amsmath}
\RequirePackage[numbers]{natbib}
\RequirePackage[colorlinks,citecolor=blue,urlcolor=blue]{hyperref}

\usepackage{amsmath}
\usepackage{hyperref}
\usepackage{url}
\usepackage{amsthm}
\usepackage{amssymb,amsmath}
\usepackage{mathtools}
\usepackage{bm}
\usepackage{verbatim}




\numberwithin{equation}{section}
\theoremstyle{plain}
\newtheorem{theorem}{Theorem}[section]
\newtheorem{lemma}[theorem]{Lemma}
\newtheorem{proposition}[theorem]{Proposition}
\newtheorem{corollary}[theorem]{Corollary}
\theoremstyle{definition}
\newtheorem{definition}[theorem]{Definition}

\newcommand{\MM}{\mathcal{M}}
\newcommand{\MMs}{{\MM_1}}
\newcommand{\bzero}{{\bf 0}}

\newcommand{\MMc}{\MM^\circ}

\newcommand{\R}{\mathbb{R}}

\DeclareMathOperator*{\argmax}{arg\,max}
\DeclareMathOperator*{\argmin}{arg\,min}

\newcommand{\II}{\mathcal{I}}
\newcommand{\eps}{\epsilon}
\newcommand{\sig}{\sigma}

\renewcommand{\P}{\mathsf{P}}
\newcommand{\E}{\mathsf{E}}

\newcommand{\la}{\langle}
\newcommand{\ra}{\rangle}
\newcommand{\grad}{\nabla}

\newcommand{\XX}{\mathcal{X}}
\newcommand{\inv}{^{-1}}
\newcommand{\lam}{\lambda}

\newcommand{\PP}{\mathcal{P}}
\newcommand{\de}{\delta}
\newcommand{\NN}{\mathcal{N}}

\newcommand{\al}{\alpha}

\newcommand{\ang}[2]{\langle #1,#2\rangle}

\title{Hardness of parameter estimation \\ in graphical models}

\author{
Guy Bresler\textsuperscript{1}  \quad David Gamarnik\textsuperscript{2} \quad Devavrat Shah\textsuperscript{1}
\\
Laboratory for Information and Decision Systems
\\
Department of EECS\textsuperscript{1} and Sloan School of Management\textsuperscript{2}
\\ 
Massachusetts Institute of Technology\\
\texttt{\{gbresler,gamarnik,devavrat\}@mit.edu}
}

%


\nipsfinalcopy 

\begin{document}

\maketitle

\begin{abstract}
We consider the problem of learning the canonical parameters specifying an undirected graphical model (Markov random field) from the mean parameters. 
For graphical models representing a minimal exponential family, the canonical parameters are uniquely determined by the mean parameters, so the 
problem is feasible in principle. The goal of this paper is to investigate the computational feasibility of this statistical task. 
Our main result shows that parameter estimation is in general intractable: no algorithm can learn the canonical parameters of a generic pair-wise 
binary graphical model from the mean parameters in time bounded by a polynomial in the number of variables (unless RP = NP). Indeed, such a result has been believed to be true (see \cite{wainwright2008graphical}) but no proof was known. 

Our proof gives a polynomial time reduction from approximating the partition function of the hard-core model, known to be hard, to learning approximate parameters. Our reduction 
entails showing that the marginal polytope boundary has an inherent repulsive property, which validates an optimization procedure over the polytope that does not use any knowledge 
of its structure (as required by the ellipsoid method and others). 
\end{abstract}

\section{Introduction}
\label{s:intro}

Graphical models are a powerful framework for succinct representation of complex high-dimensional distributions.
As such, they are at the core of machine learning and artificial intelligence, and are used in a variety of applied fields including finance, signal processing, communications, biology, as well as the modeling of social and other complex networks.
In this paper we focus on binary pairwise undirected graphical models, a rich class of models with wide applicability.
This is a parametric family of probability distributions, and
for the models we consider, the canonical parameters $\theta$ are uniquely determined by the vector $\mu$ of mean parameters, which consist of the node-wise and pairwise marginals.

Two primary statistical tasks pertaining to graphical models
are inference and parameter estimation
. A basic inference problem is the computation of marginals (or conditional probabilities) given the model, that is, the \emph{forward mapping} $\theta\mapsto \mu$. 
Conversely, the \emph{backward mapping} $\mu\mapsto \theta$ corresponds to learning the canonical parameters from the mean parameters. The backward mapping is defined only for $\mu$ in the marginal polytope $\MM$ of realizable mean parameters, and this is important in what follows. The backward mapping captures maximum likelihood estimation of parameters; the study of the statistical properties of maximum likelihood estimation for exponential families is a classical and important subject.

In this paper we are interested in the \emph{computational tractability} of these statistical tasks. A basic question is whether or not these maps can be computed efficiently (namely in time polynomial in the problem size). As far as inference goes,
it is well known that approximating the forward map (inference) is computational hard in general. This was shown by Luby and Vigoda~\cite{luby1999fast} for the hard-core model, a simple pairwise binary graphical model (defined in \eqref{e:HC}). More recently, remarkably sharp results have been obtained, showing that computing the forward map for the hard-core model is tractable if and only if the system exhibits the correlation decay property \cite{sly2012computational,galanis2012inapproximability}. In contrast, to the best of our knowledge, no analogous hardness result exists for the backward mapping (parameter estimation), despite its seeming intractability \cite{wainwright2008graphical}.

Tangentially related hardness results have been previously obtained for the problem of learning the \emph{graph structure} underlying an undirected graphical model. Bogdanov et al. \cite{bogdanov2008complexity} showed hardness of determining graph structure when there are hidden nodes,
and Karger and Srebro \cite{karger2001learning} showed hardness of finding the maximum likelihood graph with a given treewidth.
Computing the backward mapping, in comparison, requires estimation of the parameters when the graph is known.

Our main result, stated precisely in the next section, establishes hardness of approximating the backward mapping for the hard-core model.
Thus, despite the problem being \emph{statistically} feasible, it is computationally intractable. 

The proof is by reduction, showing that the backward map can be used as a black box to efficiently estimate the partition function of the hard-core model. The reduction, described in Section~\ref{s:gradient}, uses the variational characterization of the log-partition function as a constrained convex optimization over the marginal polytope of realizable mean parameters. The gradient of the function to be minimized is given by the backward mapping, and we use a projected gradient optimization method.
Since approximating the partition function of the hard-core model is known to be computationally hard, the reduction implies hardness of approximating the backward map.

The main technical difficulty in carrying out the argument arises because the convex optimization is constrained to the marginal polytope, an intrinsically complicated object. Indeed, even determining membership (or evaluating the projection) to within a crude approximation of the polytope is NP-hard \cite{shah2011hardness}. Nevertheless, we show that it is possible to do the optimization without using any knowledge of the polytope structure, as is normally required by ellipsoid, barrier, or projection methods. To this end, we prove that the polytope boundary has an inherent repulsive property that keeps the iterates inside the polytope without actually enforcing the constraint. The consequence of the boundary repulsion property is stated in Proposition~\ref{p:stayInside} of Section~\ref{s:gradient}, which is proved in Section~\ref{s:mainProof}.

Our reduction has a close connection to the variational approach to approximate inference \cite{wainwright2008graphical}. There,
the conjugate-dual representation of the log-partition function leads to a relaxed optimization problem defined over a tractable bound for the marginal polytope and with a simple surrogate to the entropy function. What our proof shows is that accurate approximation of the gradient of the entropy obviates the need to relax the marginal polytope.

We mention a related work of Kearns and Roughgarden \cite{roughgarden2013marginals} showing a polynomial-time reduction from inference to determining membership in the marginal polytope. Note that such a reduction does not establish hardness of parameter estimation: the empirical marginals obtained from samples are \emph{guaranteed} to be in the marginal polytope, so an efficient algorithm could hypothetically exist for parameter estimation without contradicting the hardness of
marginal polytope membership.

After completion of our manuscript, we learned that Montanari \cite{Mon14} has independently and simultaneously obtained similar results showing hardness of parameter estimation in graphical models from the mean parameters. His high-level approach is similar to ours, but the details differ substantially.

\section{Main result}\label{s:mainResult}

In order to establish hardness of learning parameters from marginals for pairwise binary graphical models, we focus on a specific instance of this class of graphical models,
the hard-core model. Given a graph $G=(V, E)$ (where $V=\{1,\dots,p\}$), the collection of independent set vectors $\II(G)\subseteq \{0,1\}^V$ consist of vectors $\sigma$ such that $\sig_i=0$ or $\sig_j=0$ (or both) for every edge $\{i,j\}\in E$. Each vector $\sig\in \II(G)$ is the indicator vector of an independent set. The hard-core model assigns nonzero probability only to independent set vectors, with
\begin{equation}
\label{e:HC}
\P_\theta(\sig) = \exp\bigg(\sum_{i\in V} \theta_i \sig_i - \Phi(\theta)\bigg)\quad \text{for each}\quad \sig\in \II(G)\,.
\end{equation}
This is an exponential family with vector of sufficient statistics $\phi(\sig) = (\sig_i)_{i\in V}\in \{0,1\}^p$ and vector of canonical parameters $\theta = (\theta_i)_{i\in V}\in \R^p$. In the statistical physics literature the model is usually parameterized in terms of node-wise fugacity (or activity) $\lambda_i = e^{\theta_i}$.
The log-partition function
$$
\Phi(\theta) = \log \Bigg(\sum_{\sig\in\II(G)} \exp\bigg(\sum_{i\in V}\theta_i \sig_i\bigg)\Bigg)\
$$ serves to normalize the distribution; note that $\Phi(\theta)$ is finite for all $\theta\in \R^p$. Here and throughout, all logarithms are to the natural base.

The set $\MM$ of realizable mean parameters plays a major role in the paper, and is defined as
$$
\MM= \{\mu\in \R^p |\, \text{there exists a } \theta \text{ such that } \E_\theta[\phi(\sig)] = \mu \}\,.
$$
For the hard-core model~\eqref{e:HC}, the set $\MM$ is a polytope equal to the convex hull of independent set vectors $\II(G)$ and is called the \emph{marginal polytope}. The marginal polytope's structure can be rather complex, and one indication of this is that the number of half-space inequalities needed to represent $\MM$ can be very large, depending on the structure of the graph $G$ underlying the model ~\cite{deza1997geometry,ziegler2000lectures}.

The model~\eqref{e:HC} is a regular minimal exponential family, so for each $\mu$ in the interior $\MMc$ of the marginal polytope there corresponds a unique $\theta(\mu)$ satisfying the dual matching condition
$$
\E_\theta[\phi(\sig)] = \mu\,.
$$

We are concerned with approximation of the backward mapping $\mu\mapsto \theta$, and we use the following notion of approximation.
\begin{definition} \label{d:approx}
We say that $\hat y\in \R$ is a $\delta$-approximation to $y\in \R$ if $y(1-\delta) \leq \hat y \leq (1+\delta)$. A vector $\hat v\in \R^p$ is a $\delta$-approximation to $v\in \R^p$ if each entry $\hat v_i$ is a $\delta$-approximation to $v_i$.
\end{definition}
We next define the appropriate notion of efficient approximation algorithm.
\begin{definition} \label{d:FPRAS}
A \emph{fully polynomial randomized approximation scheme} (FPRAS) for a mapping $f_p:\XX_p\to \R$ is a randomized algorithm that for each $\delta>0$ and input $x\in \XX_p$, with probability at least $3/4$ outputs a
$\delta$-approximation $\hat f_p(x)$ to $f_p(x)$
and moreover the running time is bounded by a polynomial $Q(p,\delta\inv)$.
\end{definition}

Our result uses the complexity classes RP and NP, defined precisely in any complexity text (such as~\cite{papadimitriou2003computational}). The class RP consists of problems solvable by efficient (randomized polynomial) algorithms, and NP consists of many seemingly difficult problems with no known efficient algorithms. It is widely believed that $\textsc{NP} \neq \textsc{RP}$. Assuming this, our result says that there cannot be an efficient approximation algorithm for the backward mapping in the hard-core model (and thus also for the more general class of binary pairwise graphical models).

We recall that approximating the backward mapping entails taking a vector $\mu$ as input and producing an approximation of the corresponding
vector of canonical parameters $\theta$ as output. It should be noted that even determining whether a given vector $\mu$ belongs to the marginal polytope $\MM$ is known to be an NP-hard problem \cite{shah2011hardness}. However, our result shows that the problem is NP-hard even if the input vector $\mu$ is known \emph{a priori} to be an element of the marginal polytope $\MM$. 

\begin{theorem}\label{t:hard}
Assuming $\textsc{NP} \neq \textsc{RP}$, there does not exist an \textsc{FPRAS} for the backward mapping $\mu\mapsto\theta$.
\end{theorem}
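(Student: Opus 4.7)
The plan is to reduce approximation of the log-partition function $\Phi(\theta)$ of the hard-core model, which is NP-hard at high fugacity on bounded-degree graphs~\cite{sly2012computational,galanis2012inapproximability}, to the backward map. The starting point is the standard variational identity for exponential families,
$$
\Phi(\theta) \;=\; \sup_{\mu \in \MMc}\bigl\{\langle \theta, \mu\rangle + H(\mu)\bigr\}\,,
$$
where $H(\mu)$ is the Shannon entropy of the hard-core distribution with mean parameter $\mu$. A direct calculation shows that the gradient of the concave objective $F(\mu) := \langle \theta, \mu\rangle + H(\mu)$ equals $\theta - \theta(\mu)$, so the assumed \textsc{FPRAS} for the backward map (Definition~\ref{d:FPRAS}) yields an approximate gradient oracle for $F$.

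Given such an oracle, I would run plain gradient ascent on $F$ with a carefully chosen step size, starting from a mean parameter $\mu^{(0)}$ sitting deep inside $\MMc$ (for instance the mean parameter at very small fugacity, which is easy to compute). The key difficulty is that the optimization is constrained to $\MM$, but projection onto $\MM$ (and even membership testing) is NP-hard~\cite{shah2011hardness}. This is precisely what Proposition~\ref{p:stayInside} must handle: because $H(\mu)\to -\infty$ as $\mu$ approaches $\partial\MM$, the backward map $\theta(\mu)$ blows up in a direction essentially normal to the approached face, so the gradient $\theta - \theta(\mu)$ points strongly inward. With an appropriately small step the iterates stay in $\MMc$ without ever invoking a projection or membership oracle, and the argument must be robust to the multiplicative perturbation introduced by $\hat \theta$.

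Once the iterates are trapped in the interior, standard convex-analytic bounds give $F(\mu^{(k)}) \to \Phi(\theta)$ at a polynomial rate. To extract an approximation of $\Phi(\theta)$ itself I would write $F(\mu^{(k)}) - F(\mu^{(0)})$ as a discrete line integral of the gradient along the iterate trajectory, because gradient differences involve only the backward map (queryable via \textsc{FPRAS}) and not the unknown entropy; combining this with a duality-gap bound converts an $\eps$-optimal $\mu^{(k)}$ into an $\eps$-approximation of $\Phi(\theta)$ with polynomial overhead. Choosing the target accuracy $\delta$ of the backward-map \textsc{FPRAS} and the number of iterations as polynomial functions of $p$ and $\eps^{-1}$ then produces an \textsc{FPRAS} for $\Phi(\theta)$, contradicting the known hardness under $\textsc{NP}\neq\textsc{RP}$.

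The main obstacle is the boundary-repulsion statement itself (Proposition~\ref{p:stayInside}). One must quantify how fast $\|\theta(\mu)\|$ diverges as $\mu$ approaches each facet of $\MM$ and verify that a gradient step, once perturbed multiplicatively by the \textsc{FPRAS} error in $\hat\theta$, still keeps the iterate a controlled distance away from every facet. Since $\MM$ can have exponentially many facets with widely varying geometry, and since multiplicative error on a coordinate of $\theta(\mu)$ that is already large (which happens precisely near the boundary) is itself large in absolute terms, the repulsion estimate must be sharp enough that the inward drift strictly dominates the noise. This polytope-geometry argument is where the bulk of the technical work in Section~\ref{s:mainProof} lives; the rest of the reduction is routine convex optimization with parameter bookkeeping.
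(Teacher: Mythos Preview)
Your high-level architecture matches the paper's: the variational characterization, the backward map as a first-order oracle, and the boundary-repulsion property (Proposition~\ref{p:stayInside}) as the technical crux. Two points, however, diverge from the paper.

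First, your extraction step carries a genuine gap. To recover $\Phi(\theta)$ you propose to compute $F(\mu^{(k)})-F(\mu^{(0)})$ by a discrete line integral of $\nabla F=\theta-\theta(\mu)$ along the iterate path. The discretization error of any such Riemann sum is governed by the Lipschitz constant of $\mu\mapsto\theta(\mu)$, i.e.\ by $\|(\nabla^{2}\Phi)^{-1}\|_{\mathrm{op}}=1/\lambda_{\min}\bigl(\mathrm{Cov}_{\theta(\mu)}(\sigma)\bigr)$. The paper never bounds this quantity on $\MMs$, and it is not clear that it is polynomial in $p$; Lemma~\ref{l:strongConvexity} gives strong convexity of $\Phi^{*}$, equivalently an \emph{upper} bound on the covariance, which is the wrong direction for your purpose. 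Without a polynomial bound on this Hessian inverse the line-integral error is uncontrolled and the reduction is not polynomial. The paper sidesteps this entirely: it fixes $\theta=\bzero$, runs the iteration, and uses strong convexity of $\Phi^{*}$ (Lemma~\ref{l:strongConvexity}) to convert a small optimality gap $\Phi^{*}(\bar x^{T})-\Phi^{*}(\mu(\bzero))$ into a small distance $\|\bar x^{T}-\mu(\bzero)\|$. The output is an approximation to the \emph{marginals} $\mu(\bzero)$, whose hardness follows from self-reducibility (Corollary~\ref{c:marginalsHard}); the value of $\Phi$ is never computed. Your argument becomes correct if you replace the line-integral step by this strong-convexity step and target the forward map rather than the partition function.

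Second, a smaller discrepancy: the paper does not use unconstrained gradient steps. Each iterate is projected onto the shifted orthant $[q\eps,\infty)^{p}$ via the trivial coordinate-wise threshold $\PP_{\geq}$. The lower bound $x_i\ge q\eps$ is precisely what makes the gradient estimates of Lemmas~\ref{l:gradBoundPos} and~\ref{l:gradBoundNeg} available, and it is used explicitly in the repulsion argument of Section~\ref{sec:fullPropProof}. The repulsion of Proposition~\ref{p:stayInside} pushes iterates away only from the ``nontrivial'' facets $\langle h,x\rangle\le 1$ of $\MM$; the nonnegativity-type constraints are handled by $\PP_{\geq}$, not by repulsion.
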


As discussed in the introduction, Theorem~\ref{t:hard} is proved by showing that the backward mapping can be used as a black-box to efficiently estimate the partition function of the hard core model, known to be hard. This uses the variational characterization of the log-partition function as well as a projected gradient optimization method. {Proving validity of the projected gradient method requires overcoming a substantial technical challenge: we show that the
iterates remain within the marginal polytope without explicitly enforcing this (in particular, we do not project onto the polytope). The bulk of the paper is devoted to establishing this fact, which may be of independent interest.}

In the next section we give necessary background on conjugate-duality and the variational characterization as well as review the result we will use on hardness of computing the log-partition function. The proof of Theorem~\ref{t:hard} is then given in Section~\ref{s:gradient}.

\section{Background}
\label{s:background}

\subsection{Exponential families and conjugate duality} \label{s:expFam}
We now provide background on exponential families (as can be found in the monograph by Wainwright and Jordan \cite{wainwright2008graphical}) specialized to the hard-core model~\eqref{e:HC} on a fixed graph $G=(V,E)$. General theory on conjugate duality justifying the statements of this subsection can be found in Rockafellar's book \cite{rockafellar1997convex}.

The basic relationship between the canonical and mean parameters is expressed via conjugate (or Fenchel) duality. The conjugate dual of the log-partition function $\Phi(\theta)$ is
\begin{equation*} 
\Phi^*(\mu):= \sup_{\theta\in \R^d}\Big\{ \la \mu,\theta\ra - \Phi(\theta)\Big\}\,.
\end{equation*}
Note that for our model $\Phi(\theta)$ is finite for all $\theta\in \R^p$ and furthermore the supremum is uniquely attained. On the interior $\MMc$ of the marginal polytope, $-\Phi^*$ is the entropy function. The log-partition function can then be expressed as
\begin{equation}\label{e:Legendre2}
\Phi(\theta) = \sup_{\mu \in \MM} \Big\{\la \theta,\mu\ra - \Phi^*(\mu)\Big\}\,,\quad 
\end{equation}
with
\begin{equation}\label{e:Legendre3}
\mu(\theta) = 
\argmax_{\mu\in \MM} 
\Big\{\ang\theta\mu- \Phi^*(\mu)\Big\}\,.
\end{equation}
The forward mapping $\theta\mapsto \mu$ is specified by the variational characterization \eqref{e:Legendre3} or alternatively by the gradient map $\grad \Phi : \R^p \to \MM$. 

As mentioned earlier, for each $\mu$ in the interior $\MMc$ there is a unique $\theta(\mu)$ satisfying
the dual matching condition
$
\E_{\theta(\mu)}[\phi(\sig)] = (\grad \Phi)(\theta(\mu)) = \mu
$.

For mean parameters $\mu\in \MMc$, the backward mapping $\mu\mapsto \theta(\mu)$ to the canonical parameters is given by
$$
\theta(\mu) = \argmax_{\theta\in \R^p} \Big\{\ang \mu\theta - \Phi(\theta)\Big\}
$$
 or by the gradient $$\grad \Phi^* (\mu) = \theta(\mu)\,.$$ The latter representation will be the more useful one for us.

\subsection{Hardness of inference}
\label{s:hardnessOfInference}

We describe an existing result on the hardness of inference and state the corollary we will use. The result says that, subject to widely believed conjectures in computational complexity, no efficient algorithm exists for approximating the partition function of certain hard-core models. 
Recall that the hard-core model with fugacity $\lambda$ is given by~\eqref{e:HC} with $\theta_i = \ln \lam$ for each $i\in V$. 

\begin{theorem}[\cite{sly2012computational,galanis2012inapproximability}]\label{theorem:Sly2012}
Suppose $d\geq 3$ and $\lam> \lam_c(d)=\frac{(d-1)^{d-1}}{(d-2)^d}$. Assuming $\textsc{NP} \neq \textsc{RP}$, there exists no \textsc{FPRAS} for computing
the partition function of the hard-core model with fugacity $\lam$ on regular graphs of degree $d$. In particular, no \textsc{FPRAS} exists when $\lambda=1$
and $d\ge 5$.
\end{theorem}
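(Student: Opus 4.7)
The plan is to exploit the sharp connection between the uniqueness/non-uniqueness phase transition of the hard-core model on the infinite $d$-regular tree and computational hardness. The threshold $\lam_c(d)=(d-1)^{d-1}/(d-2)^d$ is precisely the critical fugacity above which the standard tree recursion for the occupation probability of the root has multiple stable fixed points; on the bipartite $d$-regular tree this corresponds to two distinct semi-translation-invariant Gibbs states, an ``even-heavy'' and an ``odd-heavy'' one. The goal is to transfer this phase coexistence to finite random bipartite $d$-regular graphs and then use it as the core of a gadget reduction from an NP-hard source problem.

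The first main step is a second moment analysis of the partition function $Z_\lam(G)$ over a uniformly chosen $d$-regular bipartite graph $G=(L\cup R,E)$ with $|L|=|R|=n$. I would compute $\E[Z_\lam(G)]$ and $\E[Z_\lam(G)^2]$ by stratifying over profiles $(\alpha,\beta)$ counting the numbers of occupied vertices on the two sides and using the configuration model to count compatible graphs. For $\lam>\lam_c(d)$ the first moment should be dominated by two symmetric ``unbalanced'' profiles related by swapping $L$ and $R$, corresponding to the two stable tree-recursion fixed points. The key estimate is a bound $\E[Z_\lam(G)^2] \leq \mathrm{poly}(n)\cdot(\E[Z_\lam(G)])^2$ after symmetry reduction, which via the Paley--Zygmund inequality forces $\log Z_\lam(G)$ to concentrate around an explicit Bethe free energy $\Phi_\mathrm{Bethe}(\lam,d)$ with high probability over $G$.

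The second main step is the reduction. Using this concentration, I would construct a finite bipartite ``phase gadget'' whose independent-set partition function, probed through a small boundary of external vertices, behaves as a two-state system with long-range order: with high probability the internal configuration lies in one of two phases, and the marginal law on the boundary conditioned on the phase is biased in a predictable, phase-dependent way. Taking many copies of this gadget and wiring them together along the edges of an NP-hard bounded-degree problem (for instance, approximate MAX-CUT on cubic graphs), I would build a composite graph $H$ whose partition function $Z_\lam(H)$ factors approximately as a weighted sum over joint phase assignments of the gadgets, with the dominant term encoding the optimum of the source instance up to a multiplicative gap independent of $n$. An \textsc{FPRAS} for $Z_\lam(H)$ would then yield a randomized approximation algorithm for the source problem, contradicting $\textsc{NP}\neq\textsc{RP}$; the special case $\lam=1$ is obtained for $d$ large enough that $\lam_c(d)<1$.

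The main obstacle is the second moment bound and, specifically, the matching of its saddle point to the square of the first moment within a single phase. Above $\lam_c(d)$ the Gibbs measure is non-extremal, so pairs of independent sets drawn jointly have a nontrivial ferromagnetic overlap biased toward the heavy side; one must compute the two-dimensional large-deviations rate function for the joint occupation profile and verify that, after restricting to configurations concentrated on a single phase, the maximizer sits at the product (``independent pair'') point. Showing that the resulting rate function vanishes exactly at $\lam=\lam_c(d)$ and is strictly positive above it is the technical heart of the Sly and Galanis--Stefankovic--Vigoda arguments, and it is what pins the computational threshold precisely at the statistical-physics phase transition.
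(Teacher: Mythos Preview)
The paper does not prove this theorem. Theorem~\ref{theorem:Sly2012} is quoted as a result from the literature, attributed to \cite{sly2012computational,galanis2012inapproximability}, and is used purely as a black box; the only thing the present paper derives from it is Corollary~\ref{c:marginalsHard} (hardness of approximating the marginals $\mu(\bzero)$), via the standard self-reducibility telescoping argument in the Appendix. There is therefore no proof in the paper to compare your proposal against.

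That said, your sketch is a reasonable high-level outline of the actual arguments in the cited references: a second-moment analysis of $Z_\lam$ on random bipartite $d$-regular graphs to establish phase coexistence above $\lam_c(d)$, followed by a gadget reduction from an NP-hard bounded-degree optimization problem (approximate \textsc{Max-Cut} in Sly's original paper). If your aim were to reconstruct those proofs, the plan is on the right track, though as you yourself note the delicate step---matching the second-moment saddle to the square of the first moment within a single phase---requires substantial calculation that your proposal only names rather than carries out. None of this, however, is what the present paper does; it simply invokes the result and moves on.
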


We remark that the source of hardness is the long-range dependence property of the hard-core model for $\lam>\lam_c(d)$. It was shown in \cite{Weitz} that for $\lam<\lam_c(d)$ the model exhibits decay of correlations and there is an FPRAS for the log-partition function (in fact there is a deterministic approximation scheme as well). 
We note that a number of hardness results are known for the hardcore and Ising models, including \cite{dyer2002counting,sly2010computational,sly2012computational,luby1999fast,galanis2012inapproximability,jaeger1990computational,jerrum1993polynomial,istrail2000statistical}. The result stated in Theorem~\ref{theorem:Sly2012} suffices for our purposes. 

From this section we will need only the following corollary, proved in the Appendix. The proof, standard in the literature, uses the self-reducibility of the hard-core model to express the partition function in terms of marginals computed on subgraphs.

\begin{corollary}\label{c:marginalsHard} Consider the hard-core model~\eqref{e:HC} on graphs of degree most $d$ with parameters $\theta_i =0$ for all $i\in V$. Assuming $\textsc{NP} \neq \textsc{RP}$, there exists no \textsc{FPRAS} $\hat \mu(\bzero)$ for the vector of marginal probabilities $\mu(\bzero)$, where error is measured entry-wise as per Definition~\ref{d:approx}.  
\end{corollary}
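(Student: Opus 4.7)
The plan is to reduce approximate counting of independent sets (equivalently, multiplicative approximation of $Z(G) = \exp(\Phi(\bzero))$) to approximate marginal computation, using the well-known self-reducibility of the hard-core model. Suppose for contradiction that an \textsc{FPRAS} $\hat\mu(\bzero)$ exists for the marginal vector on graphs of maximum degree at most $d$, for some $d\ge 5$. I would use it as a black box to build an \textsc{FPRAS} for $Z(G)$, directly contradicting Theorem~\ref{theorem:Sly2012}.

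The key identity comes from conditioning on a single vertex. For any $v\in V$ in the hard-core model with $\theta\equiv\bzero$ (so fugacity $\lambda=1$),
\begin{equation*}
1 - \mu_v \;=\; \P_{\bzero}(\sigma_v = 0) \;=\; \frac{Z(G\setminus v)}{Z(G)},
\end{equation*}
hence $Z(G)/Z(G\setminus v) = 1/(1-\mu_v)$. Fix any ordering $v_1,\dots,v_p$ of $V$ and set $G_0 = G$, $G_i = G_{i-1}\setminus v_i$, so that $G_p = \emptyset$ and $Z(G_p) = 1$. Telescoping gives
\begin{equation*}
Z(G) \;=\; \prod_{i=1}^{p} \frac{1}{1 - \mu_{v_i}(G_{i-1})}\,.
\end{equation*}
Crucially, every $G_{i-1}$ is an induced subgraph of $G$ and therefore still has maximum degree at most $d$, so the hypothesized \textsc{FPRAS} applies to it.

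Applying this \textsc{FPRAS} to each $G_{i-1}$ with accuracy parameter $\delta' := \delta/(4p)$ and reading off the $v_i$-th coordinate yields $\hat\mu_{v_i}$. To control error propagation I use the bound $\mu_v \le 1/2$, which holds at $\lambda=1$: since $Z(G) = Z(G\setminus v) + Z(G\setminus N[v])$ and $Z(G\setminus v)\ge Z(G\setminus N[v])$ (the latter being a subgraph of the former), we get $Z(G)\ge 2Z(G\setminus N[v])$, whence $\mu_{v_i} \le 1/2$ and $1-\mu_{v_i}\ge 1/2$. Then
\begin{equation*}
\big|(1-\hat\mu_{v_i}) - (1-\mu_{v_i})\big| \;\le\; \delta'\mu_{v_i} \;\le\; \delta'(1-\mu_{v_i}),
\end{equation*}
so each $1-\hat\mu_{v_i}$ is a $\delta'$-approximation to $1-\mu_{v_i}$. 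A routine product-error estimate (bounding $(1\pm\delta')^{\pm p}$ via $e^{\pm p\delta'}$) then shows $\prod_i 1/(1-\hat\mu_{v_i})$ is a $\delta$-approximation to $Z(G)$ for the chosen $\delta'$.

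The remaining bookkeeping is amplification: each \textsc{FPRAS} call succeeds with probability $\ge 3/4$, so taking the median of $O(\log p)$ independent runs boosts this to $1 - 1/(4p)$, and a union bound over the $p$ calls gives overall success probability $\ge 3/4$. The total runtime stays polynomial in $p$ and $\delta^{-1}$, delivering an \textsc{FPRAS} for $Z(G)$ and thereby contradicting Theorem~\ref{theorem:Sly2012}. There is no real obstacle here; the only step demanding attention is the $1-\mu_v \ge 1/2$ estimate, which is exactly what prevents the multiplicative errors from compounding uncontrollably through the telescoping product.
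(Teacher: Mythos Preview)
Your proposal is correct and follows essentially the same self-reducibility/telescoping argument as the paper's proof. You are in fact more careful than the paper's terse sketch: the paper asserts directly that a $\gamma/p$-approximation to each $\mu_i$ yields a $\gamma$-approximation to $Z^{-1}$ without justifying the passage from multiplicative error in $\mu_i$ to multiplicative error in $1-\mu_i$, whereas your $\mu_v\le 1/2$ bound and the amplification bookkeeping make this rigorous.
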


\section{Reduction by optimizing over the marginal polytope}
\label{s:gradient}
\label{s:Opti}
In this section we describe our reduction and prove Theorem~\ref{t:hard}. We define polynomial constants
{
\begin{equation}\label{eq:constants}
\eps = p^{-8}\,,   \quad q=p^5\,, \quad\text{and} \quad s = \big(\frac{\eps}{2p}\big)^2\,,
\end{equation}
which we will leave as $\eps$, $q$, and $s$ to clarify the calculations. Also, given the asymptotic nature of the results, we assume that $p$ is
larger than a universal constant so that certain inequalities are satisfied. }

\begin{proposition}\label{p:reduction}
Fix a graph $G$ on $p$ nodes. Let $\hat\theta:\MMc\to \R^p$ be a black box giving a $\gamma$-approximation for the backward mapping $\mu\mapsto \theta$ for the hard-core model~\eqref{e:HC}. Using $1/\eps\gamma^2$ calls to $\hat\theta$, and computation bounded by a polynomial in {$p, 1/\gamma$},
it is possible to produce a $4\gamma p^{7/2}/q\eps^2$-approximation $\hat\mu(\bzero)$ to the marginals $\mu(\bzero)$  corresponding to all zero parameters.
\end{proposition}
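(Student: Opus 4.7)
The plan is to reduce the problem of approximating $\mu(\bzero)$ to a constrained convex optimization over $\MM$ via the variational characterization~\eqref{e:Legendre3}. With $\theta=\bzero$, \eqref{e:Legendre3} becomes
$$\mu(\bzero)=\argmax_{\mu\in\MM}\{-\Phi^*(\mu)\},$$
so $\mu(\bzero)$ is the maximum-entropy point of the marginal polytope. Its dual description identifies $\grad\Phi^*(\mu)=\theta(\mu)$, and the approximate backward map $\hat\theta$ therefore supplies an entry-wise $\gamma$-multiplicative approximation of the gradient of the objective $-\Phi^*$ we wish to maximize.

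First I would run an unprojected gradient ascent on $-\Phi^*$: starting from a safe interior point $\mu_0$ (for instance, a small uniform vector well inside $\MM$), iterate
$$\mu_{k+1}=\mu_k-s\,\hat\theta(\mu_k),\qquad k=0,1,\ldots,N-1,$$
with step size $s=(\eps/2p)^2$ and $N=1/(\eps\gamma^2)$ steps, which matches the oracle count promised by the proposition. Note that $-s\,\hat\theta(\mu_k)$ is, up to the $\gamma$ error of the oracle, a step in the direction of $-\grad(-\Phi^*)$, so this is genuine gradient ascent on entropy.

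The main obstacle, and the reason the proof is nontrivial, is showing that the iterates $\mu_k$ all remain in $\MMc$, since we explicitly cannot project (even checking membership in $\MM$ is NP-hard~\cite{shah2011hardness}). This is precisely the role of Proposition~\ref{p:stayInside}: whenever $\mu_k$ is close to the boundary of $\MM$, the true backward map $\theta(\mu_k)$ blows up and points in an outward direction, so $-s\,\theta(\mu_k)$ is a large inward correction. Provided $\gamma$ is small enough that $\hat\theta$ inherits the dominant magnitude and direction of $\theta$, the update drags $\mu_{k+1}$ back into $\MMc$. I would invoke Proposition~\ref{p:stayInside} to make this argument quantitative, thereby dispensing with a projection step altogether.

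Once the iterates are known to stay in $\MMc$, the remainder is standard inexact-gradient analysis for a smooth concave objective. The entropy $-\Phi^*$ is smooth on compact subsets of $\MMc$ bounded away from the boundary (again using the repulsion estimates), so a per-step progress inequality of the form
$$-\Phi^*(\mu_{k+1})\geq -\Phi^*(\mu_k)+c\,s\,\|\theta(\mu_k)\|^2-O(\gamma^2)$$
telescopes over $N$ steps to yield an $\eps$-approximate maximizer. To convert objective-value accuracy into an entry-wise multiplicative error on $\mu$, I would use local strong convexity of $\Phi^*$ (its Hessian equals the Fisher information at $\theta(\mu)$) to conclude that $\hat\mu=\mu_N$ is a $4\gamma p^{7/2}/q\eps^2$-approximation to $\mu(\bzero)$, as claimed. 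The main technical work lies in the boundary-repulsion estimate; the convergence and approximation bookkeeping is then routine.
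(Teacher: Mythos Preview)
Your high-level strategy is the paper's: use the variational formula~\eqref{e:Legendre3} with $\theta=\bzero$, run a gradient method with $\hat\theta$ as an approximate first-order oracle, invoke Proposition~\ref{p:stayInside} to keep the iterates inside, and then convert objective accuracy to entry-wise accuracy via strong convexity of $\Phi^*$ (Lemma~\ref{l:strongConvexity}). Two concrete points, however, do not match what is actually needed.

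First, the method is \emph{not} fully unprojected. Proposition~\ref{p:stayInside} is stated for the iterates $x^{t+1}=\PP_\geq\bigl(x^t-s\hat\theta(x^t)\bigr)$, where $\PP_\geq$ is the (trivial, coordinate-wise) projection onto $[q\eps,\infty)^p$. This projection is essential: the boundary-repulsion argument handles the hyperplane faces of $\MM$, but it does nothing to prevent a coordinate from drifting below $q\eps$ (indeed, when $\theta_i>0$ the unprojected step decreases $x_i$). Without the lower bound $x_i\geq q\eps$, the gradient bound of Lemma~\ref{l:gradBound} fails, Lemma~\ref{l:gradBoundNeg} no longer applies, and the entry-wise conversion at the end (which divides by $x^*_i\geq q\eps$) breaks down. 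So you must keep $\PP_\geq$; the point of Proposition~\ref{p:stayInside} is precisely that this cheap projection coincides with projection onto $\MMs$.

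Second, your convergence sketch relies on a descent-lemma inequality of the form $-\Phi^*(\mu_{k+1})\geq -\Phi^*(\mu_k)+cs\|\theta(\mu_k)\|^2-O(\gamma^2)$, which would require $\Phi^*$ to have a Lipschitz gradient on $\MMs$. The paper never establishes this (the Hessian of $\Phi^*$ is the inverse covariance and can be large); it only shows the gradient is \emph{bounded} on $\MMs$ (Lemma~\ref{l:gradBound}). Accordingly, the paper uses the averaged-iterate guarantee of Lemma~\ref{l:gradient}, which needs only a bound on $\|\widehat{\nabla G}\|$, and returns $\bar x^T=\frac1T\sum_t x^t$ rather than the last iterate. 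With those two fixes your outline becomes the paper's proof.
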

We first observe that Theorem~\ref{t:hard} follows almost immediately. 

\begin{proof}[Proof of Theorem~\ref{t:hard}]
A standard median amplification trick (see e.g. \cite{jerrum1986random}) allows to decrease the probability $1/4$ of erroneous output by a FPRAS to below $1/p\eps\gamma^2$ using $O(\log (p\eps\gamma^2))$ function calls. Thus the assumed FPRAS for the backward mapping can be made to give a $\gamma$-approximation $\hat \theta$ to $\theta$ on $1/\eps \gamma^2$ successive calls, with probability of no erroneous outputs equal to at least $3/4$. By taking $\gamma = \tilde \gamma  q\eps^2p^{-7/2}/2$ in Proposition~\ref{p:reduction} we get a $\tilde{\gamma}$-approximation to $\mu(\bzero)$ with computation bounded by a polynomial in $p,1/\tilde{\gamma}$.
In other words, the
existence of an FPRAS for the mapping $\mu\mapsto\theta$ gives an FPRAS for the marginals $\mu(\bzero)$, and
by Corollary~\ref{c:marginalsHard} this is not possible if $\text{NP}\neq\text{RP}$.
\end{proof}

We now work towards proving Proposition~\ref{p:reduction}, the goal being to estimate the vector of marginals $\mu(\bzero)$ for some fixed graph $G$. The desired marginals are given by the solution to the optimization~\eqref{e:Legendre3} with $\theta = \bzero$:
\begin{equation}\label{e:mainOpt}
\mu(\bzero) = -\argmin_{\mu\in \MM} \Phi^*(\mu) \,.
\end{equation} We know from Section~\ref{s:background}
that for $x\in \MMc$ the gradient $\grad \Phi^*(x) = \theta(x)$, that is, the backward mapping amounts to a gradient first order (gradient) oracle.
A natural approach to solving the optimization problem~\eqref{e:mainOpt} is to use a projected gradient method. For reasons that will be come clear later, instead of projecting onto the marginal polytope $\MM$, we project onto the shrunken marginal polytope $\MMs\subset \MM$ defined as \begin{equation}\label{e:MMsDef}
\MMs =  \{ \mu \in \MM\cap [q\epsilon,\infty)^p: \mu +\eps\cdot e_i\in \MM\text{ for all } i\}\,,
\end{equation} where $e_i$ is the $i$th standard basis vector.

As mentioned before, projecting onto $\MMs$ is NP-hard, and this must therefore be avoided if we are to obtain a polynomial-time reduction. Nevertheless, we temporarily assume that it is possible to do the projection and address this difficulty later.
With this in mind, we propose to solve the optimization~\eqref{e:mainOpt} by a projected gradient method with fixed step size~$s$,
\begin{equation}\label{e:gradUpdates}
x^{t+1} =  \PP_{\MMs}(x^t - s\grad \Phi^*(x^t)) =
\PP_{\MMs}(x^t - s \theta(x^t))\,,
\end{equation}

In order for the method~\eqref{e:gradUpdates} to succeed a first requirement is that the optimum is inside $\MMs$. The following lemma is proved in the Appendix.
\begin{lemma}\label{l:inM}
Consider the hard core model \eqref{e:HC} on a graph $G$ with maximum degree $d$ on $p\geq 2^{d+1}$ nodes and canonical parameters $\theta=\bzero$. Then the corresponding vector of mean parameters $\mu(\bzero)$ is in~$\MMs$.
\end{lemma}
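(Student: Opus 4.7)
The conditions defining $\MMs$ are: (i) $\mu(\bzero)\in\MM$; (ii) each coordinate satisfies $\mu_i(\bzero)\geq q\eps = p^{-3}$; and (iii) $\mu(\bzero)+\eps e_i\in\MM$ for every $i$. Condition (i) is immediate since $\mu(\bzero)$ is by definition a realizable mean parameter. The work is in (ii) and (iii), and the crucial input is the assumption $p\geq 2^{d+1}$, which gives $p^{-3}\leq 2^{-(d+1)}$.

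For (ii), note that with $\theta=\bzero$ the distribution $\P_{\bzero}$ is uniform over the set $\II(G)$ of independent sets, so $\mu_i(\bzero)=Z_1/Z$, where $Z_1$ counts independent sets containing $i$ and $Z=|\II(G)|$. The plan is to bound $Z$ in terms of $Z_1$ by partitioning $\II(G)$ according to the intersection with $\{i\}\cup N(i)$. Independent sets disjoint from $\{i\}\cup N(i)$ are in bijection with $Z_1$ (add/remove $i$), and independent sets $S$ with $i\notin S$ but $S\cap N(i)\neq\emptyset$ inject into $Z_1$ via $S\mapsto(S\setminus N(i))\cup\{i\}$ with fibers of size at most $2^d-1$. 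This yields $Z\leq (2^d+1)Z_1$, hence $\mu_i(\bzero)\geq 2^{-(d+1)}\geq p^{-3}=q\eps$.

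For (iii), I will exhibit $\mu(\bzero)+\eps e_i$ as an explicit convex combination of independent set indicators by redistributing mass from the uniform measure. Let $\mathcal{A}_i=\{S\in\II(G):S\cap(\{i\}\cup N(i))=\emptyset\}$ be the independent sets to which $i$ may be added; by the bijection above, $|\mathcal{A}_i|=Z_1=\mu_i(\bzero)\cdot Z$. Transfer a mass $\delta=\eps/|\mathcal{A}_i|$ from each $S\in\mathcal{A}_i$ onto $S\cup\{i\}\in\II(G)$, leaving every other probability at $1/Z$. Since $\chi_{S\cup\{i\}}-\chi_S=e_i$, the resulting mean parameter is exactly $\mu(\bzero)+\eps e_i$, and the new weights are nonnegative provided $\delta\leq 1/Z$, i.e.\ $\eps\leq\mu_i(\bzero)$. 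This is guaranteed by step (ii) together with $\eps=p^{-8}\leq p^{-3}$.

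The only substantive step is (ii); (iii) is then a short calculation once (ii) has been used to control the amount of mass available at independent sets disjoint from $\{i\}\cup N(i)$. No sharper structural information about the marginal polytope is needed beyond the fact that $\chi_S\in\MM$ for every $S\in\II(G)$.
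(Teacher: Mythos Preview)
Your proof is correct and follows essentially the same strategy as the paper: both arguments partition $\II(G)$ according to the intersection with $\{i\}\cup N(i)$, use the bijection $S_i\leftrightarrow S_i^-$ together with the at-most-$2^d$-to-one map $S_i^\oslash\to S_i$ to get $\mu_i(\bzero)\geq 2^{-(d+1)}$, and then exhibit $\mu(\bzero)+\eps e_i\in\MM$ by shifting mass from each $S\in S_i^-$ to $S\cup\{i\}\in S_i$. Your version in fact tracks the constants more carefully than the paper's write-up (which transfers a fixed amount $2^{-d-1}$ per configuration rather than the normalized $\eps/|S_i^-|$ you use).
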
\label{l:gradBound}

One of the benefits of operating within $\MMs$ is that the gradient is bounded by a polynomial in $p$, and this will allow the optimization procedure to converge in a polynomial number of steps. The following lemma amounts to a rephrasing of Lemmas~\ref{l:gradBoundPos} and \ref{l:gradBoundNeg} in Section~\ref{s:mainProof} and the proof is omitted.

\begin{lemma}\label{l:gradBound}
We have the gradient bound $\|\grad \Phi^*(x)\|_\infty = \|\theta(x) \|_\infty \leq p/\eps = p^{9}$ for any $x\in \MMs$.
\end{lemma}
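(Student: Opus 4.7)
The plan is to prove the sup-norm bound one coordinate at a time and one side at a time, exploiting the two structural guarantees packaged into the definition of $\MMs$: the slack $x + \eps e_i \in \MM$ is what gives the upper bound on $\theta_i(x)$, and the slack $x_i \geq q\eps$ is what gives the lower bound.

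For the upper bound $\theta_i(x) \leq p/\eps$, I would apply the tangent inequality of the convex function $\Phi^*$ at $x$ in the direction $\eps e_i$,
$$\Phi^*(x + \eps e_i) \;\geq\; \Phi^*(x) + \eps\, \theta_i(x),$$
which is legitimate because $x + \eps e_i \in \MM = \mathrm{dom}(\Phi^*)$ by definition of $\MMs$. Rearranging, $\theta_i(x) \leq [\Phi^*(x+\eps e_i) - \Phi^*(x)]/\eps$. Since $-\Phi^*$ is the entropy of a probability distribution supported on $\II(G) \subseteq \{0,1\}^p$, the function $\Phi^*$ takes values in $[-p \log 2,\, 0]$ on $\MM$, so the numerator is at most $p \log 2$. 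This gives $\theta_i(x) \leq (p \log 2)/\eps < p/\eps$.

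For the lower bound $\theta_i(x) \geq -p/\eps$, I would use the explicit single-site conditional distribution of the hard-core model. Since $\sig_i = 1$ is allowed only when no neighbor of $i$ has $\sig_j = 1$, the formula $\P_\theta(\sig) \propto \exp(\sum_j \theta_j \sig_j)\mathbf{1}\{\sig \in \II(G)\}$ yields, for any fixed configuration on $V \setminus \{i\}$,
$$\P_\theta(\sig_i = 1 \mid \sig_{V\setminus\{i\}}) \;=\; \frac{e^{\theta_i}}{1 + e^{\theta_i}} \cdot \mathbf{1}\{\sig_j = 0 \text{ for every neighbor } j \text{ of } i\}.$$
Taking expectations, $x_i = \P_\theta(\sig_i = 1) \leq e^{\theta_i}/(1 + e^{\theta_i})$, so $\theta_i(x) \geq \log(x_i/(1 - x_i))$. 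The slack $x_i \geq q\eps = p^{-3}$ then forces $\theta_i(x) \geq \log(p^{-3}/(1 - p^{-3})) \geq -4 \log p$, which beats $-p/\eps = -p^9$ for $p$ past a universal constant.

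The one technical wrinkle is checking that $\MMs \subset \MMc$, so that $\grad \Phi^*(x) = \theta(x)$ is a genuine gradient rather than an arbitrary subgradient. This should follow from the facet structure of the independent set polytope: any nontrivial ``upper'' facet $\sum_j a_j \mu_j \leq b$ necessarily has $a_j > 0$ for some $j$ (since $\bzero \in \MM$ gives $b \geq 0$) and is therefore violated at $x + \eps e_j$, while the only ``lower'' facets are $\mu_i \geq 0$, which are ruled out by $x_i \geq q\eps > 0$. Granting this, the two one-sided bounds above are each essentially a single line; this geometric verification is the only step that requires real care.
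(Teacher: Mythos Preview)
Your proposal is correct and takes a genuinely different route from the paper. The paper does not prove this lemma directly; it points to Lemmas~5.3 and~5.4, each of which is proved by contradiction through the primal variational characterization $\theta(\mu)=\argmax_\theta\{\langle\mu,\theta\rangle-\Phi(\theta)\}$: assuming $\theta_i$ is too large (resp.\ too negative), one constructs an explicit probability measure on $\II(G)$ realizing $\mu$ with enough mass on $S_i^-$ (resp.\ $S_i$) to force $F_\mu(\theta)<F_\mu(\bzero)$, a contradiction. Your argument bypasses these constructions. For the upper bound you use the first-order convexity inequality for $\Phi^*$ together with the global range $\Phi^*(\MM)\subseteq[-p\log 2,\,0]$ (valid on all of $\MM$, not just $\MMc$, by the max-entropy interpretation of $-\Phi^*$ or directly by Jensen), which gives $\theta_i\le(p\log 2)/\eps$ in one line. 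For the lower bound your single-site conditional computation yields $\theta_i\ge\log\bigl(x_i/(1-x_i)\bigr)$, and with $x_i\ge q\eps=p^{-3}$ this gives roughly $\theta_i\ge -3\log p$, which is far sharper than the paper's $\theta_i\ge -p/(q\eps)=-p^4$. The facet dichotomy you invoke to get $\MMs\subset\MMc$ (non-negativity facets versus $\langle h,\mu\rangle\le 1$ with $h\ge 0$) is exactly what the paper asserts and uses later in the proof of Proposition~4.6, so that step is on solid ground. In short, your approach is shorter and makes transparent which piece of the definition of $\MMs$ drives each one-sided bound; the paper's approach is more uniform (same contradiction template twice) and avoids appealing to the entropy interpretation of $\Phi^*$.
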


Next,
we state general conditions under which an approximate projected gradient algorithm converges quickly. Better convergence rates are possible using the strong convexity of $\Phi^*$ (shown in Lemma~\ref{l:strongConvexity} below), but this lemma suffices for our purposes. The proof is standard (see \cite{nesterov2004introductory} or
Theorem~3.1 in \cite{bubeck14} for a similar statement) and is given in the Appendix for completeness.

\begin{lemma}[Projected gradient method]\label{l:gradient}
Let $G:C\to \R$ be a convex function defined over a compact convex set $C$ with minimizer $x^*\in \argmin_{x\in C}G(x)$. Suppose we have access to an approximate gradient oracle $\widehat{\grad G}(x)$ for $x\in C$ with error bounded as $\sup_{x\in C}\|\widehat{\grad G}(x) - \grad G(x) \|_1\leq \delta/2$.
Let $L = \sup_{x\in C}\|\widehat{\grad G}(x)\|$.
Consider the projected gradient method $x^{t+1} =\PP_C( x^t - s \widehat{\grad G}(x^t))$
starting at $x^1 \in C$ and with
fixed step size $s = \delta/2L^2$.  After $T = 4\|x^1-x^*\|^2L^2 /\delta^2$ iterations the average $\bar x^T = \frac1T \sum_{t=1}^T x^t$ satisfies
$
G(\bar x^T) - G(x^*) \leq \delta
$.
\end{lemma}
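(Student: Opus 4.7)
The plan follows the textbook analysis of projected subgradient descent, adapted to an inexact oracle. Set $g_t := \widehat{\grad G}(x^t)$ and $e_t := g_t - \grad G(x^t)$, so that by hypothesis $\|e_t\|_1 \le \delta/2$ (and hence $\|e_t\|_2 \le \delta/2$ as well). Three ingredients are needed.

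\emph{Ingredient 1 (one-step descent from non-expansiveness).} Since $x^*\in C$ and $\PP_C(x^*)=x^*$, the Euclidean projection onto the convex set $C$ is non-expansive, giving
\[
\|x^{t+1}-x^*\|^2 \le \|x^t - s g_t - x^*\|^2 = \|x^t-x^*\|^2 - 2s\langle g_t, x^t-x^*\rangle + s^2\|g_t\|^2.
\]

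\emph{Ingredient 2 (convexity with error splitting, telescoping, and Jensen).} Convexity of $G$ together with $\grad G(x^t) = g_t - e_t$ yields
\[
G(x^t)-G(x^*) \le \langle \grad G(x^t), x^t-x^*\rangle = \langle g_t, x^t-x^*\rangle - \langle e_t, x^t-x^*\rangle.
\]
Substituting into Ingredient~1, rearranging to isolate $G(x^t)-G(x^*)$, summing over $t=1,\dots,T$ so the squared distances telescope, using $\|g_t\|\le L$, and applying Jensen's inequality to the convex function $G$ (so that $G(\bar x^T)\le \tfrac{1}{T}\sum_t G(x^t)$), I would obtain
\[
G(\bar x^T)-G(x^*) \le \frac{\|x^1-x^*\|^2}{2sT} + \frac{sL^2}{2} + \frac{1}{T}\sum_{t=1}^T |\langle e_t, x^t-x^*\rangle|.
\]

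\emph{Ingredient 3 (plug in the parameters).} With $s = \delta/(2L^2)$ and $T = 4\|x^1-x^*\|^2 L^2/\delta^2$, the first two terms each evaluate to exactly $\delta/4$, contributing $\delta/2$. For the gradient-error term I would invoke H\"older, $|\langle e_t, x^t-x^*\rangle| \le \|e_t\|_1\|x^t-x^*\|_\infty$, combined with $\|e_t\|_1 \le \delta/2$ and the compactness of $C$ (which uniformly bounds $\|x^t-x^*\|_\infty$), to absorb it into the claimed bound $G(\bar x^T)-G(x^*)\le \delta$.

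The main obstacle is isolated to this last inexact-gradient contribution. For exact gradients ($e_t\equiv 0$) the argument reproduces the classical $RL/\sqrt{T}$ rate verbatim; the inexact case is what forces (i) averaging, since neither $G(x^t)$ nor $\|x^t-x^*\|^2$ need decrease monotonically, and (ii) the joint tuning of $s$ against $T$ so that the two standard error terms balance at $\delta/4$ and leave slack to swallow the oracle error. Everything else---non-expansiveness of the projection, the telescoping sum, and Jensen's inequality---is bookkeeping. The lemma is then applied in Section~\ref{s:gradient} with $G=\Phi^*$, $C$ a shrunken marginal polytope, and $\widehat{\grad G}=\hat\theta$, where $L$ is polynomial in $p$ by Lemma~\ref{l:gradBound}.
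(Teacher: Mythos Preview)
Your proof is correct and follows essentially the same route as the paper's: non-expansiveness of the projection, convexity combined with the H\"older $\ell_1$--$\ell_\infty$ bound on the inexact-gradient term, telescoping, Jensen, and then substituting the stated $s$ and $T$. Both your argument and the paper's silently rely on $\|x^t-x^*\|_\infty \le 1$ to reduce the oracle-error contribution to $\delta/2$ (the paper simply drops the factor $\|x^t-x^*\|_\infty$ without comment); this is not part of the lemma's hypotheses but holds in the application since $C\subseteq [0,1]^p$.
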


To translate accuracy in approximating the function $\Phi^*(x^*)$ to approximating $x^*$, we use the fact that $\Phi^*$ is strongly convex. The proof (in the Appendix) uses the equivalence between strong convexity of $\Phi^*$ and strong smoothness of the Fenchel dual $\Phi$, the latter being easy to check. 
Since we only require the implication of the lemma, we defer the definitions of strong convexity and strong smoothness to the appendix where they are used.
\begin{lemma}\label{l:strongConvexity}
The function $\Phi^*:\MMc\to \R$ is $p^{-\frac32}$-strongly convex. As a consequence, if $\Phi^*(x) - \Phi^*(x^*)\leq \de$ for $x\in \MMc$ and $x^* = \argmin_{y\in \MMc}\Phi^*(y)$, then $\|x - x^*\|\leq  2p^{\frac32} \de$.
\end{lemma}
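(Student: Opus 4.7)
The plan is to deduce strong convexity of $\Phi^*$ from strong smoothness of its Fenchel dual $\Phi$ via the classical Legendre duality, then apply a routine consequence of strong convexity at an interior minimizer.

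First, I would verify that $\Phi$ is strongly smooth by direct calculation. Since $\Phi$ is the log-partition function of the exponential family~\eqref{e:HC}, its Hessian is the covariance matrix of the sufficient statistics,
\[
\grad^2 \Phi(\theta) = \mathrm{Cov}_\theta[\phi(\sig)]\,,
\]
a $p\times p$ positive semidefinite matrix whose $(i,j)$ entry is the covariance of the indicators $\sig_i,\sig_j\in\{0,1\}$. The trace equals $\sum_i \mathrm{Var}_\theta[\sig_i]\leq p/4$, and since the matrix is PSD this bounds the operator norm: $\|\grad^2\Phi(\theta)\|_{\mathrm{op}}\leq p/4\leq p^{3/2}$ uniformly in $\theta\in\R^p$. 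Hence $\Phi$ is $p^{3/2}$-strongly smooth in the Euclidean sense.

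Next, I would invoke the classical Fenchel duality principle: for a proper closed convex function $f$ on $\R^p$, $f$ is $L$-strongly smooth if and only if its conjugate $f^*$ is $L^{-1}$-strongly convex on the interior of its domain. Since the hard-core model is a regular minimal exponential family, $\Phi$ is finite, strictly convex, and $C^\infty$ on $\R^p$, and $\grad\Phi$ is a diffeomorphism onto $\MMc$; this is precisely the setting in which the duality theorem (as formulated in Rockafellar~\cite{rockafellar1997convex}) applies, giving $p^{-3/2}$-strong convexity of $\Phi^*$ on $\MMc$.

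Finally, for the implication, at the interior minimizer $x^*\in\MMc$ the first-order condition $\grad\Phi^*(x^*)=0$ holds because $\MMc$ is open. The defining inequality of $p^{-3/2}$-strong convexity at the pair $(x,x^*)$ then reduces to
\[
\Phi^*(x)-\Phi^*(x^*)\geq \tfrac{1}{2}p^{-3/2}\|x-x^*\|^2\,,
\]
and combining with the hypothesis $\Phi^*(x)-\Phi^*(x^*)\leq \de$ yields $\|x-x^*\|^2\leq 2p^{3/2}\de$, from which the stated bound on $\|x-x^*\|$ follows. The main obstacle, if any, is the careful application of Fenchel duality on the open domain $\MMc$ rather than the closed polytope $\MMb$; regularity of the exponential family handles this, and the remaining computations are routine.
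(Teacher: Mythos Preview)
Your approach is essentially the paper's: show $\Phi$ is $p^{3/2}$-strongly smooth and invoke the smoothness/strong-convexity Fenchel duality, then apply the resulting inequality at the interior minimizer where $\grad\Phi^*(x^*)=0$. The only difference is in how smoothness is verified: the paper bounds the Lipschitz constant of $\grad\Phi$ coordinate-wise via $\partial_{\theta_r}\mu_i=\mathrm{Cov}_\theta(\sig_i,\sig_r)$ and then passes through $\ell_1$/$\ell_2$ norm inequalities to obtain the constant $p^{3/2}$, whereas your trace-of-covariance argument is cleaner and in fact gives the sharper constant $p/4$.

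One small caveat, shared by your write-up and the paper's: strong convexity yields $\|x-x^*\|^2\le 2p^{3/2}\de$, i.e.\ $\|x-x^*\|\le\sqrt{2p^{3/2}\de}$, which is not literally the stated bound $\|x-x^*\|\le 2p^{3/2}\de$. The paper glosses over this with ``this gives the desired bound''; your phrase ``from which the stated bound follows'' does the same. In the paper's application only the square-root version is actually needed (and is stronger for small $\de$), so this is a cosmetic mismatch with the lemma statement rather than a gap in the argument.
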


At this point all the ingredients are in place to show that the updates~\eqref{e:gradUpdates} rapidly approach $\mu(\bzero)$, but a crucial difficulty remains to be overcome.
The assumed black box $\hat\theta$ for approximating the mapping $\mu\mapsto \theta$ is \emph{only defined for $\mu$ inside $\MM$}, and thus it is not at all obvious how to evaluate the projection onto the closely related polytope $\MMs$.
Indeed, as shown in \cite{shah2011hardness}, even approximate projection onto $\MM$ is NP-hard, and no polynomial time reduction can require projecting onto $\MMs$ (assuming $ \text{P}\neq\text{NP}$).

The goal of the subsequent Section~\ref{s:mainProof} is to prove Proposition~\ref{p:stayInside} below, which states that
the optimization procedure can be carried out without any knowledge about $\MM$ or $\MMs$. Specifically, we show that thresholding coordinates suffices, that is, instead of projecting onto $\MMs$ we may project onto the translated non-negative orthant $[q\eps,\infty)^p$. Writing $\PP_\geq$ for this projection, we show that the original projected gradient method \eqref{e:gradUpdates} has \emph{identical} iterates $x^t$ as the much simpler update rule
\begin{equation}\label{e:easyGradUpdates}
x^{t+1} =  \PP_{\geq }(x^t - s \theta(x^t)) \,.
\end{equation}

{
\begin{proposition}\label{p:stayInside}
Choose constants  as per \eqref{eq:constants}. Suppose $x^1\in \MMs$, and consider the iterates $x^{t+1} = \PP_\geq(x^t - s\hat{\theta}(x^t))$ for $t\geq 1$, where
$\hat{\theta}(x^t)$ is a $\gamma$-approximation of $\theta(x^t)$ for all $t \geq 1$. Then $x^{t} \in \MMs$, for all $t\geq 1$, and thus the iterates are the same using either $\PP_\geq$ or $\PP_\MMs$.
\end{proposition}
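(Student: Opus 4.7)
The plan is by induction on $t$. The base case $t=1$ is the hypothesis $x^1\in\MMs$. For the inductive step, assume $x^t\in\MMs$ and verify the three defining conditions of $\MMs$ for $x^{t+1}$: (a) $x^{t+1}_i\geq q\eps$ for all $i$, (b) $x^{t+1}\in\MM$, and (c) $x^{t+1}+\eps e_j\in\MM$ for all $j$. Condition (a) is automatic from the definition of $\PP_\geq$. Once (b) and (c) are established, the conclusion that $\PP_\geq$ and $\PP_\MMs$ produce the same iterate follows immediately: the point $\PP_\geq(x^t - s\hat\theta(x^t))$ lies in $\MMs$, hence its projection onto the smaller set $\MMs$ is itself.

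For (b) and (c), fix any facet inequality $\ang{a}{\mu}\leq b$ of $\MM$. The coordinate facets $\mu_i\geq 0$ are handled by (a), so we may assume $a$ has at least one strictly positive entry. Since $x^{t+1}+\eps e_j\in\MM$ for every $j$ is equivalent to $\ang{a}{x^{t+1}}\leq b - \eps a_j$ for every $j$, it suffices to show
\[
\ang{a}{x^{t+1}}\leq b - \eps\max_j a_j
\]
for every such facet (this also implies (b)). The argument splits into two regimes according to how close $x^t$ is to the facet. In the \emph{slack regime}, $\ang{a}{x^t}$ lies well below $b - \eps\max_j a_j$, by a margin exceeding the worst-case single-step displacement. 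Using Lemma~\ref{l:gradBound} to bound $\|\hat\theta(x^t)\|_\infty$ by roughly $p/\eps$, the coordinatewise change $\|x^{t+1}-x^t\|_\infty$ is at most about $s\cdot p/\eps$, which by the choice of constants in \eqref{eq:constants} is far smaller than $\eps$, so the margin is preserved.

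In the \emph{tight regime}, $\ang{a}{x^t}$ is within this threshold of $b$. Here the boundary repulsion property of the entropy $-\Phi^*$ takes over: as $x$ approaches the facet with outward normal $a$, the gradient component $\ang{a}{\theta(x)}=\ang{a}{\grad\Phi^*(x)}$ diverges to $+\infty$, because $\Phi^*$ blows up on $\partial\MM$. Since $\gamma$-approximation is multiplicative and sign-preserving, $\ang{a}{\hat\theta(x^t)}$ inherits this large positive value up to a factor $(1-\gamma)$. The step $-s\hat\theta(x^t)$ therefore decreases $\ang{a}{\cdot}$ substantially, by an amount that dominates the upward correction introduced by $\PP_\geq$ (which acts only on coordinates $i$ with $\hat\theta_i(x^t)>0$, inflating coordinate $i$ by at most $s\hat\theta_i(x^t)$), and the required margin is restored. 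The principal obstacle is quantifying this divergence: one must prove that the growth rate of $\ang{a}{\theta(x)}$ as $x$ nears the facet is large enough to beat the step size $s$, the projection slack, and the approximation error $\gamma$, uniformly over all facets of the intricate polytope $\MM$. This quantitative repulsion is precisely the content of Lemmas~\ref{l:gradBoundPos} and~\ref{l:gradBoundNeg} proved in Section~\ref{s:mainProof}, where the bulk of the technical work resides.
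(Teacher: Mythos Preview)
Your high-level decomposition into slack and tight regimes matches the paper's inactive/active classification, and the inductive skeleton is correct. But there is a genuine gap in the tight-regime argument, and a mislabeling of the key lemmas.

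First, the mislabeling: you write that the quantitative repulsion ``is precisely the content of Lemmas~\ref{l:gradBoundPos} and~\ref{l:gradBoundNeg}.'' Those lemmas give \emph{upper} bounds on $|\theta_i|$ (namely $\theta_i\leq p/\eps$ and $\theta_i\geq -p/q\eps$); they are used to control the step size and to bound the damage from coordinates that might \emph{increase}. The repulsion itself---a large \emph{lower} bound on some $\theta_i$ when $x$ is near a facet---is supplied by Lemma~\ref{l:critCoord}, which you do not invoke.

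Second, and more seriously, your treatment of the projection $\PP_\geq$ misses the main difficulty. You describe it as introducing only an ``upward correction'' on coordinates with $\hat\theta_i>0$, to be dominated by the substantial decrease coming from the repulsive coordinate. But the repulsive coordinate $\ell$ itself has $\hat\theta_\ell>0$ large, and if $x^t_\ell$ happens to sit at the floor $q\eps$, the projection annihilates the entire decrease in that coordinate: $x^{t+1}_\ell=q\eps=x^t_\ell$. In that case there is no net repulsion at all, and the inner product $\ang{a}{x^{t+1}}$ can exceed $\ang{a}{x^t}$. The paper handles this with an additional combinatorial step (Lemma~\ref{l:goodCoordinate} in the supplement): given an active constraint $h$, one can always find a coordinate $\ell$ that is simultaneously tight enough that $x^t+4p\eps\,e_\ell\notin\MM$ (so Lemma~\ref{l:critCoord} makes $\theta_\ell$ large) \emph{and} satisfies $x^t_\ell\geq 2q\eps$ (so the projection does not clip the decrease). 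This coupling of ``near the facet'' with ``away from the floor'' is the heart of the argument and is absent from your plan. After securing such an $\ell$, the paper partitions the remaining coordinates into those with nonpositive increment and those in $D^c$, and uses Lemma~\ref{l:gradBoundNeg} together with a comparison of the $h_j$ against $h_\ell$ to show the total increase from $D^c$ is dominated by the decrease at $\ell$.
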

}

The next section is devoted to the proof of Proposition~\ref{p:stayInside}. We now complete the reduction.

\begin{proof}[Proof of Proposition~\ref{p:reduction}]
We start the gradient update procedure $x^{t+1} = \PP_\geq(x^t - s\hat{\theta}(x^t))$ at the point
$x^1 = (\frac1{2p},\frac1{2p},\dots,\frac1{2p})$, which we claim is within $\MMs$ for any graph $G$ for $p=|V|$ large enough.
To see this, note that $ (\frac{1}{p},\frac{1}{p},\dots,\frac{1}{p}) $ is in $\MM$, because it is a convex combination (with weight $1/p$ each) of the independent set vectors $e_1,\dots,e_p$. Hence $x^1+\frac1{2p}\cdot e_i\in \MM $, and additionally  $x^1_i = \frac1{2p}\geq q\epsilon$, for all $i$.

We establish that $x^t \in \MMs$ for each $t\geq 1$ by induction, having verified the base case $t = 1$ in the preceding paragraph. Let $x^t \in \MMs$
for some $t \geq 1$. At iteration $t$ of the update rule we make a call to the black box  $\hat\theta(x^t)$ giving a $\gamma$-approximation to the
backward mapping $\theta(x^t)$, compute $x^t - s\hat{\theta}(x^t)$,  and then project onto $[q\eps,\infty)^p$.
Proposition~\ref{p:stayInside} ensures that $x^{t+1}  \in \MMs$. Therefore, the update $x^{t+1} = \PP_\geq(x^t - s\hat{\theta}(x^t))$
is the same as $x^{t+1} = \PP_{\MMs}(x^t - s\hat{\theta}(x^t))$.

Now we can now apply Lemma~\ref{l:gradient} with $G=\Phi^*$, $C = \MMs$, $\delta = 2\gamma p^2/\eps$ and $L = \sup_{x\in C}\|\widehat{\grad G}(x)\|_2\leq \sqrt{p(p/\eps)^2} = p^{3/2}/\eps$. After
$$T = 4\|x^1-x^*\|^2L^2 /\delta^2\leq 4p (p^3/\eps^2)/(4\gamma^2p^4/\eps^2) = 1/\gamma^2$$
 iterations the average $\bar x^T = \frac1T \sum_{t=1}^T x^t$ satisfies
$
G(\bar x^T) - G(x^*) \leq \delta
$.

Lemma~\ref{l:strongConvexity} implies that $\|\bar x^T - x^* \|_2 \leq 2\delta p^{\frac32}$, and since $x^*_i\geq q\eps$, we get the entry-wise bound $|\bar x^T_i - x^*_i|\leq 2\delta  p^{\frac32}  x^*_i/ q\eps$ for each $i\in V$. Hence $\bar x^T$ is a $4\gamma p^{7/2}/q\eps^2$-approximation for $x^*$.
\end{proof}

\section{Proof of Proposition~\ref{p:stayInside}}
\label{s:mainProof}

In Subsection~\ref{ss:gradBounds} we prove estimates on the parameters $\theta$ corresponding to $\mu$ close to the boundary of $\MMs$, and then in Subsection~\ref{ss:stayInside} we use these estimates to show that the boundary of $\MMs$ has a certain repulsive property that keeps the iterates inside.

\subsection{Bounds on gradient}\label{ss:gradBounds}
We start by introducing some helpful notation. 
For a node $i$, let $\NN(i)=\{j\in [p]: (i,j)\in E\}$ denote its neighbors. We partition the collection of independent set vectors as
$$
\II = S_i \cup S_i^-\cup S_i^\oslash\,,
$$
where
\begin{align*}
  S_i &= \{\sig\in \II: \sig_i=1\}=\{\text{Ind sets containing $i$}\}\\
  S_i^- &= \{ \sig - e_i : \sig\in S_i\}=\{\text{Ind sets where $i$ can be added}\} \\
  S_i^{\oslash} &= \{\sig\in \II: \sig_j=1\text{ for some } j\in \NN(i)\}=\{\text{Ind sets conflicting with $i$}\}\,.
\end{align*}
For a collection of independent set vectors $S\subseteq\II$ we write $\P(S)$ as shorthand for $\P_\theta(\sig\in S)$ and $$f(S) = \P (S) \cdot e^{\Phi(\theta)} = \sum_{\sig\in S}\exp\bigg(\sum_{j\in V}\theta_j \sig_j\bigg)\,.$$ We can then write the marginal at node $i$ as $\mu_i = \P(S_i)$, and since $S_i, S_i^-, S_i^\oslash$ partition $\II$, the space of all independent sets of $G$, $1 = \P(S_i)+\P(S_i^-)+\P(S_i^\oslash)$.
For each $i$ let
$$
\nu_i= \P(S_i^\oslash) = \P(\text{a neighbor of $i$ is in $\sig$})\,.
$$

The following lemma specifies a condition on $\mu_i$ and $\nu_i$ that implies a lower bound on $\theta_i$.
\begin{lemma}\label{l:gradCoord}
  If $\mu_i+\nu_i\geq 1-\delta$ and $\nu_i\leq 1- \zeta \de$ for $\zeta > 1$, then $\theta_i\geq \ln(\zeta -1)$.
\end{lemma}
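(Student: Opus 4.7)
The plan is to express $e^{\theta_i}$ as an explicit ratio of probabilities involving $\mu_i$ and $\nu_i$, and then feed in the two hypotheses directly.

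The key identity is a bijection between $S_i$ and $S_i^-$: the map $\sigma \mapsto \sigma - e_i$ pairs each independent set containing $i$ with the independent set obtained by removing $i$. For any such pair, the ratio of unnormalized weights is exactly $e^{\theta_i}$, since the parameters at all other nodes contribute identically. Summing over the bijection yields
\[
f(S_i) \;=\; e^{\theta_i}\, f(S_i^-),
\]
and dividing by $e^{\Phi(\theta)}$ gives $\P(S_i) = e^{\theta_i}\P(S_i^-)$, i.e.\
\[
\mu_i \;=\; e^{\theta_i}\bigl(1 - \mu_i - \nu_i\bigr),
\]
using the partition $\II = S_i \cup S_i^- \cup S_i^\oslash$. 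Therefore
\[
e^{\theta_i} \;=\; \frac{\mu_i}{1 - \mu_i - \nu_i}.
\]

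From here the lemma is a short arithmetic step. The hypothesis $\mu_i + \nu_i \geq 1-\delta$ gives $1 - \mu_i - \nu_i \leq \delta$, while combining it with $\nu_i \leq 1 - \zeta\delta$ yields
\[
\mu_i \;=\; (\mu_i + \nu_i) - \nu_i \;\geq\; (1-\delta) - (1-\zeta\delta) \;=\; (\zeta-1)\delta.
\]
Dividing the lower bound on $\mu_i$ by the upper bound on $1 - \mu_i - \nu_i$ gives $e^{\theta_i} \geq \zeta - 1$, i.e.\ $\theta_i \geq \ln(\zeta-1)$.

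There is no real obstacle here; the only subtlety is verifying that the bijection $\sigma \mapsto \sigma - e_i$ between $S_i$ and $S_i^-$ is indeed well-defined (the image is an independent set because removing a vertex preserves the independent set property, and the map is onto $S_i^-$ by definition). Everything else is algebra, and strictness of $\zeta > 1$ is needed only to make $\ln(\zeta - 1)$ finite (or nonnegative when $\zeta \geq 2$); the displayed inequality holds for all $\zeta > 1$ regardless.
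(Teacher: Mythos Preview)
Your proof is correct and uses the same core identity as the paper, namely $f(S_i)=e^{\theta_i}f(S_i^-)$ coming from the bijection $\sigma\mapsto\sigma-e_i$. The paper also sets $\alpha=e^{\theta_i}$, records $f(S_i)=\alpha f(S_i^-)$, translates the two hypotheses into inequalities on $f(S_i),f(S_i^-),f(S_i^\oslash)$, and combines them algebraically to obtain $\alpha\ge\zeta-1$; your version simply normalizes first to write $e^{\theta_i}=\mu_i/(1-\mu_i-\nu_i)$ and bounds numerator and denominator separately, which is a cleaner packaging of the same computation rather than a different argument.
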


\begin{proof}
  Let $\al = e^{\theta_i}$, and observe that $f(S_i) = \al f(S_i^-)$.
  We want to show that $\al\geq \zeta-1$.

  The first condition $\mu_i+\nu_i\geq 1-\de$ implies that
  \begin{align*}
  f(S_i) + f(S_i^\oslash) &\geq (1-\de)(f(S_i) + f(S_i^\oslash)+f(S_i^-))
  \\&= (1-\de)(f(S_i) + f(S_i^\oslash)+\al\inv f(S_i))\,,
  \end{align*}
  and rearranging gives
  \begin{equation}\label{e:temp1}
  f(S_i^\oslash)+f(S_i)\geq \frac{1-\de}{\de}\al\inv f(S_i) \,.
  \end{equation}
   The second condition $\nu_i\leq 1-\zeta \de$ reads $f(S_i^\oslash)\leq (1-\zeta \de)(f(S_i) + f(S_i^\oslash)+f(S_i^-))$ or
  \begin{equation}\label{e:temp2}
    f(S_i^\oslash)\leq \frac{1-\zeta\de}{\zeta\de}f(S_i)(1+\al\inv)
  \end{equation}
  Combining \eqref{e:temp1} and~\eqref{e:temp2} and simplifying results in $\al\geq \zeta-1$.
\end{proof}

We now use the preceding lemma to show that if a coordinate is close to the boundary of the shrunken marginal polytope $\MMs$, then the corresponding parameter is large.
\begin{lemma}\label{l:critCoord}
Let $r$ be a positive real number. If $\mu\in \MMs$ and $\mu+r\eps\cdot e_i\notin \MM$, then $\theta_i \geq \ln \big(\frac{q}{r}-1\big)$.
\end{lemma}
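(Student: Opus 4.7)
The plan is to deduce the two hypotheses of Lemma~\ref{l:gradCoord} from the geometric condition $\mu+r\eps\cdot e_i\notin\MM$ together with the defining properties of $\MMs$, then invoke that lemma with $\de = r\eps$ and $\zeta = q/r$.

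The core step is a mass-transport argument showing that $\mu + \P(S_i^-)\cdot e_i \in \MM$, where $\P = \P_{\theta(\mu)}$ is the exponential-family distribution realizing $\mu$. I would construct a modified distribution $\P'$ on $\II$ by sending, for each $\sig \in S_i^-$, all of its probability mass onto $\sig + e_i \in S_i$. This is well-defined because $\sig \in S_i^-$ means precisely that $\sig + e_i$ is an independent set. Since $\sig$ and $\sig + e_i$ agree at every coordinate $j \neq i$, under this transport the new marginals $\mu'_j$ for $j\neq i$ coincide with $\mu_j$, while $\mu'_i$ increases by exactly $\P(S_i^-) = 1 - \mu_i - \nu_i$. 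Hence $\P'$ is a distribution supported on $\II$ whose mean is $\mu + \P(S_i^-)\cdot e_i$, and this mean lies in $\MM = \mathrm{conv}(\II)$.

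By convexity of $\MM$, the whole segment from $\mu$ to $\mu + \P(S_i^-)\cdot e_i$ is in $\MM$, so the hypothesis $\mu + r\eps\cdot e_i \notin \MM$ forces $r\eps > \P(S_i^-) = 1 - \mu_i - \nu_i$, yielding $\mu_i + \nu_i \geq 1 - r\eps$. This supplies the first hypothesis of Lemma~\ref{l:gradCoord} with $\de = r\eps$. For the second hypothesis, membership $\mu \in \MMs$ gives $\mu_i \geq q\eps$, and combining with the trivial bound $\mu_i + \nu_i \leq 1$ yields $\nu_i \leq 1 - q\eps = 1 - (q/r)\cdot r\eps$. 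Setting $\zeta = q/r$ (so $\zeta > 1$ in the regime of interest $r < q$), the second hypothesis $\nu_i \leq 1 - \zeta\de$ holds, and Lemma~\ref{l:gradCoord} delivers $\theta_i \geq \ln(\zeta - 1) = \ln(q/r - 1)$.

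The only substantive step is the mass-transport construction; the rest is algebraic bookkeeping. I do not expect serious obstacles, but care is needed to verify that the transport preserves the non-$i$ marginals (which is the reason for pairing $\sig$ with the specific target $\sig + e_i$ rather than more elaborate surgery on the support), and to note that $\sig \in S_i^\oslash$ automatically forces $\sig_i = 0$, so the three sets $S_i, S_i^-, S_i^\oslash$ really do partition $\II$ and the identity $\P(S_i^-) = 1 - \mu_i - \nu_i$ is valid.
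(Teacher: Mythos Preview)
Your proposal is correct and follows essentially the same route as the paper: you reduce to Lemma~\ref{l:gradCoord} with $\de=r\eps$ and $\zeta=q/r$, obtain $\nu_i\le 1-q\eps$ from $\mu_i\ge q\eps$, and establish $\P(S_i^-)<r\eps$ via the identical mass-transport from $S_i^-$ to $S_i$ (the paper phrases this last step as a proof by contradiction, but the content is the same).
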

\begin{proof}
We would like to apply Lemma~\ref{l:gradCoord} with $\zeta = q/r$ and $\de = r\eps$, which requires showing  that (a) $\nu_i\leq 1-q\eps$ and (b) $\mu_i+\nu_i\geq 1-r\eps$. To show (a), note that if $\mu\in\MMs$, then $\mu_i\geq q\eps$ by definition of $\MMs$. It follows that $\nu_i\leq 1-\mu_i\leq 1-q\eps$.

We now show (b). Since $\mu_i = \P(S_i)$, $\nu_i = \P(S_i^\oslash)$, and $1 = \P(S_i)+\P(S_i^\oslash)+P(S_i^-)$, (b) is equivalent to $\P(S_i^-)\leq r\eps$. We assume that $\mu+r\epsilon\cdot e_i\notin \MM $ and suppose for the sake of contradiction that $\P(S_i^-)> r\eps$. Writing $\eta_\sig = \P(\sig)$ for $\sig\in \II$, so that $\mu = \sum_{\sig\in \II} \eta_\sig  \cdot \sig$, we define a new probability measure
\[
\eta'_\sig =
\begin{cases}
\eta_\sig + \eta_{\sig-e_i} &\text{if } \sig\in S_i \\
0 &\text{if }\sig \in S_i^-
\\
\eta_\sig &\text{otherwise}\,.
\end{cases}
\]
One can check that $\mu' = \sum_{\sig\in \II} \eta'_\sig \sig$ has $\mu'_j = \mu_j$ for each $i\neq j$ and $\mu'_i = \mu_i+\P(S_i^-)> \mu_i + r\eps$. The point $\mu'$, being a convex combination of independent set vectors, must be in $\MM$, and hence so must $\mu+ r\eps \cdot e_i$. But this contradicts the hypothesis and completes the proof of the lemma.
\end{proof}

The proofs of the next two lemmas are similar in spirit to Lemma 8 in~\cite{JWSS} and are proved in the Appendix. The first lemma
gives an upper bound on the parameters $(\theta_i)_{i\in V}$ corresponding to an arbitrary point in $\MMs$.
\begin{lemma}\label{l:gradBoundPos}
  If $\mu + \eps\cdot e_i\in \MM$, then $\theta_i \leq p/\eps$. Hence if $\mu\in \MMs$, then $\theta_i\leq p/\eps$ for all $i$.
\end{lemma}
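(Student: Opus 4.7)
The plan is to combine the convexity of $\Phi^*$ with a uniform global bound on its magnitude that comes from the fact that $-\Phi^*$ is entropy. Since $\mu\in\MMc$, the background recalled in Section~\ref{s:expFam} gives $\theta = \grad\Phi^*(\mu)$, and convexity of $\Phi^*$ immediately yields the first-order inequality
$$\Phi^*(\mu + \eps e_i) \,\geq\, \Phi^*(\mu) + \eps\,\theta_i\,.$$
This is the only place the hypothesis is actually used: $\mu + \eps e_i\in\MM$ guarantees that the left-hand side is a finite value of $\Phi^*$ (rather than $+\infty$). Rearranging gives the template bound
$$\theta_i \,\leq\, \frac{\Phi^*(\mu + \eps e_i) - \Phi^*(\mu)}{\eps}\,.$$

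To turn this into the claimed polynomial bound I would use that, on its effective domain $\MM$, $-\Phi^*$ equals the Shannon entropy of the maximum-entropy distribution matching the prescribed mean parameters. Since this distribution is supported on the set of independent sets $\II\subseteq\{0,1\}^p$, and $|\II|\leq 2^p$, its entropy lies in $[0,\log|\II|]\subseteq[0,p\log 2]\subseteq[0,p]$. Substituting the two bounds $\Phi^*(\mu+\eps e_i)\leq 0$ and $\Phi^*(\mu)\geq -p$ into the template yields $\theta_i \leq p/\eps$.

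The second sentence of the lemma follows with no extra work: by the defining condition of $\MMs$ in~\eqref{e:MMsDef}, every $\mu\in\MMs$ satisfies $\mu + \eps e_i\in\MM$ for every coordinate $i$, so the first part of the lemma applies uniformly across $i\in V$.

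I do not foresee a substantive obstacle. The only minor point to check is that the first-order inequality remains valid when $\mu+\eps e_i$ lies on the boundary of $\MM$, where $\Phi^*$ is no longer obtained as a smooth expectation of canonical parameters; this is handled by the (lower semi-)continuity of the conjugate $\Phi^*$ on its effective domain $\MM$, together with the fact that the entropy upper bound $\log|\II|$ holds throughout $\MM$, not merely on $\MMc$. In short, the entire content of the lemma is the pairing of a one-line subgradient inequality with the fact that $\Phi^*$ takes values in a $[-p,0]$-range on $\MM$.
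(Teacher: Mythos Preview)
Your argument is correct and is genuinely different from the paper's. You work at the level of the conjugate: the subgradient inequality $\Phi^*(\mu+\eps e_i)\geq \Phi^*(\mu)+\eps\,\theta_i$ together with the global range $\Phi^*\in[-\log|\II|,0]\subseteq[-p,0]$ on $\MM$ immediately gives $\theta_i\leq p/\eps$. The two endpoints of that range need no entropy interpretation on the boundary: $\Phi^*(\nu)\leq 0$ for $\nu\in\MM$ follows since $\langle\nu,\theta\rangle-\Phi(\theta)=\sum_\sigma \eta_\sigma\log \P_\theta(\sigma)\leq 0$ for any representing measure $\eta$, and $\Phi^*(\nu)\geq -\Phi(0)=-\log|\II|$ by plugging $\theta=0$ into the supremum. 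So the lower semicontinuity remark is not actually needed.

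The paper, by contrast, argues at the primal level by contradiction. It uses the hypothesis $\mu+\eps e_i\in\MM$ to build a representing measure $\eta$ for $\mu$ with $\eta(S_i^-)\geq\eps$ (shifting $\eps$ mass from $S_i$ to $S_i^-$), and then shows directly that $F_\mu(\theta)=\sum_\rho \eta_\rho\log\P_\theta(\rho)\leq -\eta(S_i^-)\,\theta_i<-p\leq F_\mu(\bzero)$, contradicting $\theta=\argmax F_\mu$. Your route is shorter and exploits only convex-analytic facts about $\Phi^*$; the paper's route is more combinatorial, leaning on the decomposition $\II=S_i\cup S_i^-\cup S_i^\oslash$ and the identity $f(S_i)=e^{\theta_i}f(S_i^-)$. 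The paper's approach makes the role of $S_i^-$ explicit (which is reused in the companion Lemma~\ref{l:gradBoundNeg} and Lemma~\ref{l:critCoord}), while yours isolates exactly the two abstract ingredients---a first-order inequality and a uniform $O(p)$ oscillation bound on $\Phi^*$---that drive the conclusion.
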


The next lemma shows that if a component $\mu_i$ is not too small, the corresponding parameter $\theta_i$ is also not too negative. As before, this allows to bound from below the parameters corresponding to an arbitrary point in~$\MMs$.

\begin{lemma}\label{l:gradBoundNeg}
If $\mu_i \geq q\eps$, then $\theta_i\geq -p/q\eps$. Hence if $\mu\in \MMs$, then $\theta_i\geq -p/q\eps$ for all $i$.
\end{lemma}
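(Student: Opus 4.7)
The plan is to upper-bound $\mu_i$ by a simple explicit function of $\alpha := e^{\theta_i}$ alone, then invert. Using the partition $\II = S_i \cup S_i^- \cup S_i^\oslash$ from the preceding subsection together with the defining relation $f(S_i) = \alpha f(S_i^-)$, one writes
$$
\mu_i \;=\; \frac{f(S_i)}{f(\II)} \;=\; \frac{\alpha f(S_i^-)}{(1+\alpha)\, f(S_i^-) + f(S_i^\oslash)}.
$$
Dropping the nonnegative term $f(S_i^\oslash)$ in the denominator gives $\mu_i \leq \alpha/(1+\alpha)$, which rearranges to $\alpha \geq \mu_i/(1-\mu_i)$. (If $f(S_i^-)=0$ the identity $f(S_i)=\alpha f(S_i^-)$ forces $\mu_i=0$, contradicting $\mu_i\geq q\eps$; so dividing by $f(S_i^-)$ is legitimate.)

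Inserting the hypothesis $\mu_i \geq q\eps$ and using $q\eps = p^{-3}\leq 1/2$ for $p$ large yields $\alpha \geq q\eps/(1-q\eps) \geq q\eps$, hence $\theta_i = \log\alpha \geq \log(q\eps) = -3\log p$. With the constants in \eqref{eq:constants} one has $p/(q\eps) = p^{4}$, so $-3\log p \geq -p^{4} = -p/(q\eps)$, which is the claimed inequality. The second assertion is immediate from the definition~\eqref{e:MMsDef}, since $\mu\in\MMs$ implies $\mu_i\geq q\eps$ coordinatewise.

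I do not expect any real obstacle here: the argument is one line of algebra plus a trivial comparison of $\log(q\eps)$ to $-p/(q\eps)$. The slack between these quantities shows that the stated bound is quite loose; presumably the form $p/(q\eps)$ was chosen to mirror the complementary estimate $\theta_i\leq p/\eps$ in Lemma~\ref{l:gradBoundPos} and to fit cleanly into the bound on $L = \sup_{x\in C}\|\widehat{\grad G}(x)\|$ used in the projected gradient analysis of Section~\ref{s:Opti}.
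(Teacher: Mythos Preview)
Your argument is correct. The identity $f(S_i)=\alpha f(S_i^-)$ and the partition of $\II$ immediately give $\mu_i\le \alpha/(1+\alpha)$, and inverting this yields $\theta_i\ge \log(q\eps)=-3\log p$, which is far stronger than the stated $-p/(q\eps)=-p^4$. (A small remark: $f(S_i^-)>0$ is automatic since the empty configuration lies in $S_i^-$, so you need not argue by contradiction about dividing by it.)

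This is a genuinely different route from the paper's. The paper proceeds by contradiction through the variational characterization $\theta(\mu)=\argmax_\theta\{\langle\mu,\theta\rangle-\Phi(\theta)\}$: assuming $\theta_i<-p/(q\eps)$ and using $\eta(S_i)=\mu_i\ge q\eps$, it bounds $F_\mu(\theta)=\langle\mu,\theta\rangle-\Phi(\theta)$ from above by $\eta(S_i)\theta_i<-p\le -\log|\II|=F_\mu(\bzero)$, contradicting optimality of $\theta$. That argument is written to parallel the proof of Lemma~\ref{l:gradBoundPos} and relies only on duality, not on the explicit combinatorial structure of the hard-core model; in particular it would transfer to other minimal exponential families. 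Your proof instead reuses the combinatorial toolkit of Lemma~\ref{l:gradCoord} (the $S_i,S_i^-,S_i^\oslash$ decomposition), is more elementary, and makes the slack in the stated bound transparent.
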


\subsection{Finishing the proof of Proposition~\ref{p:stayInside}}
\label{ss:stayInside}

We sketch the remainder of the proof here; full detail is given in Section~\ref{sec:fullPropProof} of the Supplement. 

Starting with an arbitrary $x^t$ in $\MMs$, our goal is to show that $x^{t+1} = \PP_\geq(x^t - s\hat{\theta}(x^t))$ remains in $\MMs$. The proof will then follow by induction, because our initial point $x^1$ is in $\MMs$ by the hypothesis.

The argument considers separately each hyperplane constraint for $\MM$ of the form $\ang{h}{x}\leq 1$. The distance of $x$ from the hyperplane is $1-\ang{h}{x}$. Now, the definition of $\MMs$ implies that if $x\in \MMs$, then $x+\eps \cdot e_i\in \MMs$ for all coordinates $i$, and thus $1-\ang{h}{x}\geq \eps \|h\|_\infty$ for all constraints. We call a constraint $\ang{h}{x}\leq 1$ \emph{critical} if $1-\ang{h}{x}< \eps \|h\|_\infty$, and \emph{active} if $\eps\|h\|_\infty\leq  1-\ang{h}{x}< 2\eps \|h\|_\infty$.
For $x^t\in \MMs$ there are no critical constraints, but there may be active constraints.

We first show that inactive constraints can at worst become active for the next iterate $x^{t+1}$, which requires only that the step-size is not too large relative to the magnitude of the gradient (Lemma~\ref{l:gradBound} gives the desired bound). 
Then we show (using the gradient estimates from Lemmas~\ref{l:critCoord},~\ref{l:gradBoundPos}, and~\ref{l:gradBoundNeg}) that the active constraints have a \emph{repulsive property} and that $x^{t+1}$ is no closer than $x^t$ to any active constraint, that is, $\ang h {x^{t+1}} \leq \ang h {x^{t}}$. The argument requires care, because the projection $\PP_\geq$ may prevent coordinates $i$ from decreasing despite $x^t_i - s\hat \theta_i(x^t)$ being very negative if $x_i^t$ is already small. 
These arguments together show that $x^{t+1}$ remains in $\MMs$, completing the proof.

\section{Discussion}
\label{s:discussion}

This paper addresses the computational tractability of parameter estimation for the hard-core model. Our main result shows hardness of approximating the backward mapping $\mu\mapsto\theta$ to within a small polynomial factor. This is a fairly stringent form of approximation, and it would be interesting to strengthen the result to show hardness even for a weaker form of approximation. A possible goal would be to show that there exists a universal constant $c>0$ such that approximation of the backward mapping to within a factor $1+c$ in each coordinate is NP-hard.  


\subsubsection*{Acknowledgments}
GB thanks Sahand Negahban for helpful discussions. Also we thank Andrea Montanari for sharing his unpublished manuscript \cite{Mon14}. This work was supported in part by NSF grants CMMI-1335155 and CNS-1161964, and by Army Research Office MURI Award W911NF-11-1-0036.

\newpage 
 

\bibliographystyle{ieeetr}
\bibliography{MRF_BIB.bib}

\newpage 
\appendix

\section*{Supplementary Material}
\section{Miscellaneous proofs}
\subsection{Proof of Corollary~\ref{c:marginalsHard}} 
The proof is standard and uses the self-reducibility of the hard-core model, meaning that conditioning on $\sig_i = 0$ amounts to removing node $i$ from the graph. 
Fix a graph $G$ and parameters $\theta=\bzero$.
We show that given an algorithm to approximately compute the marginals for induced subgraphs $H\subseteq G$, it is possible to approximate the partition function $ e^{\Phi(\bzero)}$, denoted here by $Z$.
 We first claim that 
\begin{equation}\label{e:partitionExpansion}
Z = \prod_{i=1}^p \frac{1}{1-\mu_i(G\setminus [i-1])}\,.
\end{equation} The graph $G\setminus [i-1]$ is obtained by removing nodes labeled $1,2,\dots, i-1$, and $\mu_i(G\setminus [i-1])$ is the marginal at node $i$ for this graph. 
We use induction on the number of nodes. The base case with one node is trivial: $Z = 1+e^0 = 2 = 1/(1-\mu)$. Suppose now that the formula~\eqref{e:partitionExpansion} holds for graphs on $k$ nodes and that $|V|=k+1$. Let $Z_0$ and $Z_1$ denote the partition function summation restricted to $\sig_1 = 0$ or $\sig_1=1$, respectively. Thus
$$
Z =  Z_0 + Z_1 = Z_0 (\frac{Z_0 + Z_1}{Z_0}) = \frac{Z_0}{1-\mu_1}\,.
$$
Now $Z_0$ is the partition function of a new graph obtained by deleting vertex $i$, and the inductive assumption proves the formula.

From \eqref{e:partitionExpansion} we see that in order to compute a $\gamma$-approximation to $Z\inv$, it suffices to compute a $\gamma/p$ approximation to each of the marginals. Now for small $\gamma$, a $\gamma$ approximation to $Z\inv$ gives a $2\gamma$ approximation to $Z$, and this completes the proof.

\subsection{Proof of Lemma~\ref{l:inM}}


We wish to show that $\mu(\bzero)\in \MMs$ for a graph $G=(V,E)$ of maximum degree $d$ and $p\geq 2^{d+1}$. Consider a particular node $i\in V$ with neighbors $N(i)$, and let $d_i=|N(i)|$ denote its degree. We use the notation $S_i, S_i^-, S_i^\oslash$ defined in Subsection~\ref{ss:gradBounds}. A collection of independent set vectors $S\subseteq \II(G)$ is assigned probability $\P(S) = |S|/|\II(G)|$ for our choice $\theta = \bzero$, so it suffices to argue about cardinalities.

We first claim that $|S_i|\geq 2^{-d}|S_i^\oslash|$. This follows by observing that each set in $S_i^\oslash$ gets mapped to a set in $S_i$ by removing the neighbors $N_i$, and moreover at most $2^d$ sets are mapped to the same set in $S_i$. Next, we note that $|S_i|=|S_i^-|$ since the removal of node $i$ is a bijection from $S_i$ to $S_i^-$ and hence they are of the same cardinality. Combining these observations with the fact that $\P(S_i) + 
\P(S_i^-)+\P(S_i^\oslash) = 1$, we get the estimate $\mu_i = \P(S_i)\geq 1/(2^{-d} + 2)\geq 2^{-d-1}$.

Next, we show for each coordinate $i$ that the vector $\mu' = \mu + 2^{-d-1} e_i$ is in $\MM$, which will complete the proof that $\mu(\bzero)$ is $\MMs$. Let $\eta_\sig = \P_\bzero(\sig)$ denote the probability assigned to $\sig$ under the distribution with parameters $\theta = \bzero$, so that $\mu = \sum_{\sig\in \II(G)} \eta_\sig \cdot \sig$. 
Similarly to the proof of Lemma~\ref{l:critCoord},
we define a new probability measure
\[
\eta'_\sig = 
\begin{cases}
\eta_\sig  + 2^{-d-1} &\text{if } \sig\in S_i \\
\eta_\sig - 2^{-d-1} &\text{if }\sig \in S_i^-
\\
\eta_\sig &\text{otherwise}\,.
\end{cases}
\]
This is a valid probability distribution because $\eta_\sig \geq 2^{-d-1}$ for $\sig \in S_i^-$. 
One can check that $\mu' = \sum_{\sig\in \II} \eta'_\sig \sig$ has $\mu'_j = \mu_j$ for each $j\neq i$ and $\mu'_i = \mu_i + 2^{-d-1}$. The point $\mu'$, being a convex combination of independent set vectors, must be in $\MM$, and hence so must $\mu+ 2^{-d-1} e_i$.

\section{Proofs for projected gradient method}
\subsection{Proof of Lemma~\ref{l:gradient}}
The proof here is a slight modification of the proof of Theorem~3.1 in \cite{bubeck14}. 

Observe first that if $\PP$ is the projection onto a convex set, then $\PP$ is a contraction: $\|\PP(x)-\PP(y)\|_2\leq \|x-y\|_2$ (cf. Prop~2.1.3 in \cite{bertsekas1999nonlinear}). Using the the convexity inequality
 $G(x) - G(x^*) \leq \grad G(x)^T(x-x^*)$, the definition $\eta = \sup_{x\in C}\|\widehat{\grad G}(x) - \grad G(x) \|_1$,
  and the update formula $x^{t+1} = x^t - s \widehat{\grad G} (x^t)$, it follows that
\begin{align*}
G(x^t) - G(x^*) &\leq \grad G(x^t)^T(x^t-x^*)\\&= \widehat{\grad G}(x^t)^T(x^t-x^*) + (\widehat{\grad G}(x^t)^T - \grad G(x^t)^T )(x^t-x^*)
\\ & \leq \widehat{\grad G}(x^t)^T(x^t-x^*) + \eta\|x^t-x^*\|_\infty
 \\&= \frac1s(x^{t} - x^{t+1})^T(x^t-x^*) + \eta
 \\&=\frac1{2s}(\|x^t - x^* \|_2^2 + \|x^t - x^{t+1} \|_2^2 - \|x^{t+1} - x^* \|_2^2)+\eta
 \\&=\frac1{2s}(\|x^t - x^* \|_2^2-\|x^{t+1} - x^* \|_2^2) + \frac{s}{2}\|\ \widehat{\grad G}(x^t)\|_2^2 +\eta\,.
\end{align*}
Adding the preceding inequality for $t=1$ to $t=T$, the sum telescopes and we get
\begin{equation}\label{e:optimizationTemp}
\sum_{t=1}^T [G(x^t) - G(x^*)] \leq \frac{R^2}{2s} + \frac{s}{2}L^2 T + \eta T = RL\sqrt{T} + \eta T\,.
\end{equation}
Here we used the definitions $R=\|x^1 - x^*\|$ and
$L = \sup_{x\in C}\|\widehat{\grad G}(x)\|$ and
the last equality is by the choice $s = \frac{R}{L\sqrt{T}}$.
Now defining $\bar x^T = \frac{1}{T}\sum_{t=1}^T x^t$, dividing~\eqref{e:optimizationTemp} through by $T$ and using the convexity of $G$ to apply Jensen's inequality gives
\[
G(\bar x^T) - G(x^*) \leq \frac{RL}{\sqrt{T}} + \eta\,.
\]
Thus in order to make the right hand side smaller than $\delta$ it suffices to take $T = 4R^2 L^2/\delta^2$ and $\eta = \delta/2$.

\subsection{Proof of Lemma~\ref{l:strongConvexity}}
We start by showing that the gradient $\grad \Phi$ is $p^{\frac32}$-Lipschitz. Recall that $\grad \Phi(\theta) = \mu(\theta)$.
We prove a bound on
$
|\mu_i(\theta) - \mu_i(\theta')|
$
by changing one coordinate of $\theta$ at a time. Let $\theta^{(r)} = (\theta_1,\dots,\theta_r,\theta_{r+1}',\dots,\theta_p')$.
The triangle inequality gives
$$
|\mu_i(\theta) - \mu_i(\theta')| = \sum_{r=0}^{p-1} |\mu_i(\theta^{(r)}) - \mu_i(\theta^{(r+1)})|
\,.
$$
A direct calculation shows that
$$
\frac{\partial}{\partial \theta_r}\mu_i(\theta) = \P(\sig_i=\sig_r=1)-\mu_i(\theta)\mu_r(\theta)\,.
$$
Since this is uniformly bounded by one in absolute value, we obtain the inequality $|\mu_i(\theta) - \mu_i(\theta')|
\leq \|\theta-\theta'\|_1$ or
$$
\|\mu(\theta) - \mu(\theta')\|_1
\leq p \|\theta-\theta'\|_1
$$
Hence 
$$
\|\mu(\theta) - \mu(\theta')\|_2\leq \|\mu(\theta) - \mu(\theta')\|_1 \leq p \|\theta-\theta'\|_1 \leq p^{\frac32}\|\theta-\theta'\|_2\,,
$$
i.e., $\grad \Phi$ is $p^{\frac32}$-Lipschitz.

Now the function $\grad \Phi$ being $p^{\frac32}$-Lipschitz implies that $\Phi$ is $p^{\frac32}$-strongly smooth, where $\Phi$ is $\beta$-strongly smooth if
\[
\Phi(x+\Delta) - \Phi(x)\leq \langle \grad \Phi(x),\Delta\rangle + \frac12 \beta \| \Delta\|^2\,.
\]
To see this, we write
\begin{align*}
\Phi(x+\Delta) - \Phi(x) = \int_{0}^1\langle \grad \Phi(x+\tau\Delta),\Delta\rangle d\tau &= 
\langle \grad \Phi(x),\Delta\rangle + \int_{0}^1\big( \grad \Phi(x+\tau\Delta) - \grad \Phi(x) \big)d\tau 
\\ &\leq \langle \grad \Phi(x),\Delta\rangle + p^{\frac32}\int_{0}^1\langle \tau \Delta,\Delta\rangle d\tau 
\\ &= \langle \grad \Phi(x),\Delta\rangle + \tfrac12 p^{\frac32}\|\Delta\|^2 \,.
\end{align*}


Now Theorem~6 from \cite{KST09} or Chapter 5 of \cite{borwein2010convex} imply that $\Phi^*$, being the Fenchel conjugate of $\Phi$, is $\ p^{-\frac32}$-strongly convex, meaning
$$
\Phi^*(x+\Delta) - \Phi^*(x) \geq \langle \grad \Phi^*(x),\Delta\rangle + \tfrac12 p^{-\frac32}\|\Delta\|^2\,.
$$
This gives the desired bound on $\|x-x^*\|$ in terms of $\Phi^*(x) - \Phi^*(x^*)$.

\section{Proofs of gradient bounds}

\subsection{Proof of Lemma~\ref{l:gradBoundPos}}

We suppose for the sake of deriving a contradiction that $\theta_i>p/\de$. Let $\bar \mu = \mu + \de\cdot e_i$, and let $\eta'$ be a probability measure such that $\bar\mu = \sum_{\sig\in\II} \eta'_\sig \sig$. Now $ \eta'(S_i) = \bar \mu_i \geq \de$, and we define the non-negative measure $\gamma$ (summing to less than one) with support $S_i$ as
$$
\gamma_\sig = 
\begin{cases} 
\eta'_\sig \cdot \frac{\de}{\eta'(S_i)} &\text{if } \sig \in S_i \\
0 &\text{otherwise}\,.
\end{cases}
$$
In this way, $\gamma_\sig\leq \eta'_\sig$ and $\gamma(S_i)=\de$. We define a new probability measure 
\begin{equation}
\eta_\sig = 
\begin{cases} 
\eta'_\sig - \gamma_\sig &\text{if } \sig \in S_i \\
\eta'_\sig + \gamma_{\sig\cup\{i\}} &\text{if } \sig \in S_i^-\\
\eta'_\sig &\text{otherwise}\,,
\end{cases}
\end{equation}
and one may check that $\mu = \sum_{\sig\in \II} \eta_\sig \sig$ and $\eta(S_i^-)\geq \gamma(S_i) = \de$. 
We use the definitions in Subsection~\ref{ss:gradBounds} to get
\begin{align*}
F_\mu(\theta) &\triangleq \mu\cdot \theta - \log\big( \sum_{\sig\in \II}\exp(\sig\cdot \theta) \big)
\\& = \sum_{\rho\in \II} \eta_\rho \log\frac{\exp(\rho\cdot \theta)}{\sum_\sig \exp(\sig\cdot \theta)} \\
&\stackrel{(a)}{=}\leq 
\sum_{\rho\in S_i^-} \eta_\rho \log\frac{\exp(\rho\cdot \theta)}{f(S_i^-)+e^{\theta_i}f(S_i^-)+ f(S_i^\oslash)}
\\ &\stackrel{(b)}{\leq} 
\sum_{\rho\in S_i^-} \eta_\rho \log\frac{f(S_i^-)}{e^{\theta_i}f(S_i^-)}
\\&\leq -\eta(S_i^-)\theta_i \stackrel{(c)}{<} -p\stackrel{(d)}{\leq} -\log|\II| = F(\bzero)\,.
\end{align*}
Here (a) follows by restricting the sum to $S_i^-\subseteq \II(G)$ and from the fact that $\sum_\sig \exp(\sig\cdot \theta)=f(S_i^-)+e^{\theta_i}f(S_i^-)+ f(S_i^\oslash)$, (b) follows by retaining only the term $e^{\theta_i}f(S_i^-)$ in the denominator and replacing $\exp(\rho\cdot \theta)$ for $\rho\in S_i^-$ with $f(S_i^-) = \sum_{\rho\in S_i^-}\exp(\rho\cdot \theta)$, thereby increasing the argument to the logarithm, (c) uses the fact that $\eta(S_i^-)\geq \delta$ and the assumption that $\theta_i>p/\delta$, and (d) follows from the crude bound on number of independent sets $|\II|\leq 2^p$ and $\log 2 < 1$. 

Finally, the relation $\theta(\mu) = \argmax_\theta F_\mu(\theta)$ from Section~\ref{s:background} contradicts $F_{\mu}(\theta)< F(\bzero)$.

\subsection{Proof of Lemma~\ref{l:gradBoundNeg}}

We suppose for the sake of contradiction that 
$\theta_i<-p/\de$ and show that $\theta$ cannot be the vector of canonical parameters corresponding to $\mu$. 

Since $\mu\in \MM$, there exists a non-negative measure $\eta$ so that $\mu = \sum_{\sig\in \II} \eta_\sig \sig$, and furthermore $\eta(S_i) = \mu_i\geq \de$. Now arguments similar to the proof of Lemma~\ref{l:gradBoundPos} above give
\begin{align*}
F_\mu(\theta) &= \mu\cdot \theta - \log\big(\sum_\sig \exp(\sig\cdot \theta)\big)
\\ &=
\sum_{\rho\in \II} \eta_\rho \log\frac{\exp(\rho\cdot \theta)}{\sum_\sig \exp(\sig\cdot \theta)}
\\&\leq
\sum_{\rho\in S_i} \eta_\rho \log\frac{\exp(\rho\cdot \theta)}{f(S_i^-)+e^{\theta_i}f(S_i^-)+f(S_i^*)}
\\ &\leq
\sum_{\rho\in S_i} \eta_\rho \log\frac{e^{\theta_i}f(S_i^-)}{f(S_i^-)+e^{\theta_i}f(S_i^-)+f(S_i^*)}
\\&\leq 
\sum_{\rho\in S_i}\eta_\rho \theta_i = \eta(S_i)\theta_i< -\de p / \de = -p \leq -\log |\II| = F(\bzero)\,.
\end{align*}
As before, this contradicts the relation $\theta(\mu) = \argmax_\theta F_\mu(\theta)$.


\section{Proof of Proposition~\ref{p:stayInside}}
\label{sec:fullPropProof}
Starting with $x^t$ in $\MMs$, our goal is to show that $x^{t+1} = \PP_\geq(x^t - s\hat{\theta}(x^t))$ remains in $\MMs$. The proof will then follow by induction, because our initial point $x^1$ is in $\MMs$ by the hypothesis.

We will use the fact that all hyperplane constraints for $\MM$, except for the non-negativity constraints $x_i\geq 0$, can be written as $\ang h x\leq 1$ for a vector $h\in [0,1]^p$.
This can be justified using the fact that $e_i\in \MM$ for each $i$ together with the property that for any $\mu\in \MM$, any coordinate of $\mu$ can be set to zero while remaining in $\MM$.

Given our current iterate $x^t$, we call a
constraint $\ang h x\leq 1$ \emph{active} if
\begin{equation}
1-
2\epsilon\| h\|_\infty < \ang h {x^t} \leq 1-
\epsilon\| h\|_\infty
\end{equation}
and \emph{critical} if
\begin{equation}
1-
\epsilon\| h\|_\infty < \ang h {x^t}\,.
\end{equation}
Observe that an active constraint has a coordinate $i$ (namely $i$ with $h_i= \|h\|_\infty$) with $\ang h{x^t+2\epsilon\cdot e_i}=\ang h {x^t}
+ 2 h_i  \epsilon>1$ and similarly a critical constraint has a coordinate $i$ with $\ang h{x^t+\epsilon\cdot e_i}=\ang h {x^t}
+ h_i \epsilon>1$.

For $x^t\in \MMs$ there are (by definition) no critical constraints, but there may be active constraints.
We will first show that inactive constraints can at worst become active for the next iterate $x^{t+1}$, which requires only that the step-size is not too large relative to the magnitude of the gradient.
Then we show that the active constraints have a repulsive property and that $x^{t+1}$ is no closer than $x^t$ to any active constraint, that is, $\ang h {x^{t+1}} \leq \ang h {x^{t}}$.
Thus, if $x^t$ is in $\MMs$, then there are no {critical} 
constraints for $x^{t+1}$ and every coordinate $i$ satisfies $\ang h {x^{t+1} + \epsilon\cdot e_i}\leq 1$ for all constraint vectors $h$. Since the projection $\PP_\geq$ ensures that $x_i^{t+1}\geq q\epsilon$, the update $x^{t+1}$ is in $\MMs$. We now focus on inactive constraints.

\paragraph{Inactive constraint.} We consider an inactive constraint $h$, meaning that
 $ \ang h {x^t} + 2\eps \|h\|_\infty \leq 1\,.$
By assumption the step size
{$s= \big(\frac{\epsilon}{2p}\big)^2$}
so the increment in any coordinate $j$ is bounded as
\begin{align*}
x_j^{t+1} - x_j^t & \leq s|\hat\theta_j(x^t)| \\
			 & \leq s|\hat \theta_j(x^t) - \theta_j(x^t) | + s |\theta_j(x^t)| \\
			 & \leq (1+\gamma) s | \theta_j(x^t)| \\
			 & \leq \eps/p
\end{align*}
using  Lemma~\ref{l:gradBoundPos} and fact that $\gamma \leq 1$. These bounds give
\begin{align*}
\ang h{x^{t+1}} & = \ang h {x^t} + \ang h{x^{t+1}-x^t}  ~ \leq \ang h {x^t}+ \sum_{j}h_j(x^{t+1}_j-x^t_j) \\
&\leq  \ang h {x^t}+ p(\eps/p)\|h\|_\infty ~ \leq 1-\epsilon \|h\|_\infty
\end{align*}
which shows that the constraint is not critical for $x^{t+1}$ and at worst becomes active.

\paragraph{Active constraint.}
The rough idea is that if a coordinate $i$ cannot be increased by $2\epsilon$ while remaining in $\MM$, then the parameter $\theta_i$ must be sufficiently large, and the next iterate $x^{t+1}$ will decrease enough to overcome the possible increase in other coordinates. This argument does not work, however, because it might be the case that $x^t_i = q\epsilon$, which prevents any decrease (i.e., $x^{t+1}_i\geq x^{t}_i$) due to the projection $\PP_\geq$.
Instead, we start by showing that if \emph{some} coordinate cannot be increased by $2\epsilon$, then there must be a \emph{reasonably large} coordinate which cannot be increased by $4p\epsilon$.
\begin{lemma}\label{l:goodCoordinate}
If $h$ is an active constraint, then there is a coordinate $\ell\in V$ with $x^t + (4p\epsilon) e_\ell\notin \MM$ and $x^t_\ell\geq 2q\epsilon$.
\end{lemma}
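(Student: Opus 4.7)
The active-constraint hypothesis tells us that $\langle h, x^t\rangle > 1 - 2\eps\|h\|_\infty$, and a coordinate $\ell$ satisfying $\langle h, x^t + 4p\eps e_\ell\rangle > 1$ will automatically give $x^t + 4p\eps \cdot e_\ell\notin \MM$. This inequality holds whenever $4p\eps \cdot h_\ell \geq 2\eps\|h\|_\infty$, i.e.\ whenever $h_\ell \geq \|h\|_\infty/(2p)$. So the entire problem reduces to finding a coordinate $\ell$ that is simultaneously "heavy in $h$" ($h_\ell\geq\|h\|_\infty/(2p)$) and "heavy in $x^t$" ($x^t_\ell\geq 2q\eps$). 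The plan is to prove this by contradiction, using the fact that $h\in [0,1]^p$ and $x^t\in\MM\subseteq [0,1]^p$ to bound $\langle h,x^t\rangle$ from above.

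\textbf{Splitting and bounding.} Partition the coordinates as $A = \{i: h_i\geq \|h\|_\infty/(2p)\}$ and $B = \{i : h_i < \|h\|_\infty/(2p)\}$. Suppose for contradiction that every $i\in A$ satisfies $x^t_i < 2q\eps$. Using $x^t_i\le 1$ on $B$ and $x^t_i<2q\eps$ on $A$, together with $h_i\leq \|h\|_\infty$ and $|B|\leq p$,
\begin{equation*}
\langle h,x^t\rangle \;=\; \sum_{i\in A} h_i x^t_i + \sum_{i\in B} h_i x^t_i \;\leq\; 2q\eps\cdot p\cdot \|h\|_\infty + p\cdot \frac{\|h\|_\infty}{2p} \;=\; \bigl(2pq\eps+\tfrac12\bigr)\|h\|_\infty.
\end{equation*}
With the choices $q=p^5,\eps=p^{-8}$ one has $2pq\eps = 2p^{-2} = o(1)$, so for $p$ large this upper bound is at most $(1/2 + o(1))\|h\|_\infty$.

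\textbf{Deriving the contradiction.} Combining the above bound with the active-constraint lower bound $\langle h,x^t\rangle > 1 - 2\eps\|h\|_\infty$ yields $(1/2 + o(1) + 2\eps)\|h\|_\infty > 1$, forcing $\|h\|_\infty$ to be strictly greater than some constant tending to $2$. This contradicts $h\in [0,1]^p$. Hence there must exist $\ell \in A$ with $x^t_\ell \geq 2q\eps$, and by the opening observation this $\ell$ also satisfies $x^t + 4p\eps\cdot e_\ell\notin \MM$.

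\textbf{Main obstacle.} The argument itself is short; the only thing that has to be handled with care is the bookkeeping of constants. One must ensure the threshold $\|h\|_\infty/(2p)$ is the right one for two purposes at once --- large enough that the "heavy-$h$" coordinates alone cannot sustain $\langle h,x^t\rangle$ close to $1$ unless they have large $x^t$-value, but small enough that a bump of size $4p\eps$ on any such coordinate pushes the point out of $\MM$. The scaling $q\eps = p^{-3}\ll 1$ is exactly what makes the two requirements compatible, and verifying this is essentially the content of the proof.
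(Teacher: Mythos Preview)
Your proof is correct and follows essentially the same averaging/pigeonhole idea as the paper's. The only organizational difference is that you partition coordinates by the size of $h_i$ (your sets $A$ and $B$) and argue by contradiction, whereas the paper partitions by the size of $x^t_j$ (its set $B=\{j:x^t_j\le 2q\eps\}$) and argues directly that some $\ell\in B^c$ has $h_\ell x^t_\ell\ge 1/(2p)$; the arithmetic and the final conclusion are the same.
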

\begin{proof} If $h$ is active then $1-2\epsilon\|h\|_\infty< \ang h {x^t}$.
Using the fact that $h_j\leq 1$ for all $j$ we have
  \begin{equation}\label{e:temp5}
 1-2\epsilon\leq 1 - 2\epsilon\| h\|_\infty< \ang h {x^{t}}\,.
  \end{equation}
 Let $B\subseteq V$ consist of coordinates $j$ with small entries $x^t_j\leq 2\epsilon q$. Then
 \begin{equation}\label{e:temp8}
 \ang h {x^{t}}= \sum_{j\in B} h_j x^t + \sum_{j\in B^c} h_j x^t \leq |B| (2\epsilon q) + \sum_{j\in B^c} h_j x^t\leq \dfrac{2}{p}+\sum_{j\in B^c} h_j x^t_j\,.
 \end{equation}
  The last inequality used the crude estimate $|B|\leq p$. Combining~\eqref{e:temp5} and~\eqref{e:temp8} and rearranging gives
  $$
  \sum_{j\in B^c} h_j x^t_j\geq 1-2\eps -2/p\geq 1-3/p\,,
  $$
  and it follows that there is an $\ell\in B^c$ for which $ h_\ell\geq h_\ell x^t_\ell \geq 1/2p$. Adding $h_\ell\cdot (4p\epsilon)\geq 2\epsilon$ to both sides of~\eqref{e:temp5} shows that $x^t + (4p\epsilon) e_\ell$ violates the inequality $\ang h x \leq 1$. This proves the lemma, since $x_\ell^t> 2q\eps$ for $\ell\in B^c$.
\end{proof}

We are now ready to prove that $\ang h{x^{t+1}}\leq \ang h{x^t}$. Let $\ell$ be the coordinate promised by Lemma~\ref{l:goodCoordinate}, with $x^t + (4p\epsilon) e_\ell\notin \MM$ and $x^t_\ell\geq 2q\epsilon$.
{From Lemma~\ref{l:critCoord}, we know that $\theta_\ell(x^t) \geq \log\big(\frac{q}{4p}-1\big) \geq 3\log p$, for $p$ large enough. By definition of $\hat \theta$ being a $\gamma$-approximation to $\theta$,
$\hat{\theta}_\ell(x^t) \geq (1-\gamma) \theta_\ell(x^t)$. Therefore, since $\gamma\to0$ as $p\to\infty$, it follows that for $p$ large enough $\hat{\theta}_\ell(x^t)\geq \log p$.
}
This implies
\begin{equation}\label{e:temp6}
x_\ell^{t+1} - x_\ell^t \leq -\min(s \hat \theta(x^t), s\log p)\leq -s\log p\,.
\end{equation}
Here we used the fact that $x^t_\ell\geq q\epsilon + s\log p$ so the projection $\PP_\geq$ does not affect this coordinate.

Denote by $D$ the set of coordinates
$$D=\{j\in [p]:\ang h {x^t}+\tfrac q2\epsilon h_j> 1\}\,.$$
These coordinates have non-positive increment: since $x_j\geq q\epsilon$ for $x\in \MMs$,
{Lemma~\ref{l:gradCoord} implies that $\theta_j\geq 0$, and hence $\hat{\theta_j} \geq (1-\gamma)\theta_j \geq 0$, or }
$$
x_j^{t+1} - x_j^t \leq 0\quad \text{for } j\in D\,.
$$

In contrast, coordinates in $D^c$ might increase, but by a limited amount: since $x^t\in \MMs$, all coordinates $j\in D^c$ satisfy $x_j^t\geq q\epsilon$, and Lemma~\ref{l:gradBoundNeg}
gives the bound $\theta_j\geq - p/q\epsilon$, or
{\begin{equation}\label{e:nonCritStepBound}
x_j^{t+1}-x_j^t \leq (1+\gamma) |-s \theta_j|\leq 2 s p/q\epsilon\quad \text{for all } j\in D^c\,.
\end{equation}}
Additionally, by the definition of $D$ and the fact that increasing coordinate $\ell$ by $4p\epsilon$ violates $\ang h x\leq 1$,
if $j\in D^c$, then $4p\epsilon h_\ell > q\epsilon h_j/2$, or
\begin{equation}\label{e:nonCritCoeff}
h_j< 8ph_\ell/q\quad \text{for all }j\in D^c\,.
\end{equation}

Using the crude bound $|D^c|\leq p$ together with~\eqref{e:nonCritStepBound} and~\eqref{e:nonCritCoeff} gives
{
\begin{equation}\label{e:temp7}
 \sum_{j\in D^c} h_j (x_j^{t+1}-x_j^t) \leq  |D^c| \frac{8ph_\ell}{q}\cdot \frac{2sp}{q\epsilon} \leq s \frac{4 p^2 }{q^2\epsilon}h_\ell \leq 4s h_\ell \,.
\end{equation}
Counting the contributions from $D^c$ in~\eqref{e:temp7} in addition to $D$ (none) and $\ell$ (negative as per~\eqref{e:temp6}), it follows that
\begin{align*}\ang c {x^{t+1}} = \ang h {x^t} + \ang h {x^{t+1}-x^t}&\leq \ang h {x^t} + s h_\ell(4  - \theta_\ell) \\
&\leq \ang h {x^t} + s h_\ell(4-\ln p)  \\
& \leq \ang h {x^t}\,.
\end{align*}
Here we have used the fact that $p$ is large enough ($p\geq e^4$ suffices for this last step). In words, we move away from any active hyperplane constraint. This completes the proof.}

\end{document}